\documentclass[11pt]{article}
\usepackage[english]{babel} 
\usepackage[utf8]{inputenc} 
\usepackage[T1]{fontenc}
\usepackage{lmodern}
\usepackage[top = 3cm ,bottom = 3cm,left= 3cm,right = 3cm]{geometry}
\usepackage{amsfonts}
\usepackage{graphicx}
\usepackage{amsmath}
\usepackage{amssymb}
\usepackage{amsthm}
\usepackage{marvosym}
\usepackage{lipsum}
\usepackage{soul,color}
\usepackage{fancyhdr}
\usepackage{authblk}
\usepackage [
n,
advantage,
operators,
sets,
adversary,
landau,
probability,
notions,
logic,
ff,
mm,
primitives,
events,
complexity,
oracles,
asymptotics,
keys
]{cryptocode}
\createprocedureblock{game}{center,boxed}{}{}{}
\usepackage{lipsum}

\newtheorem{theorem}{Theorem}[section]
\newtheorem{proposition}[theorem]{Proposition}
\newtheorem{definition}[theorem]{Definition}

\usepackage{hyperref}
\usepackage{cleveref}
\crefname{protocol}{Protocol}{Protocols}
\Crefname{protocol}{Protocol}{Protocols}

\newcommand{\conc}{\mathbin\Vert}
\newcommand{\cond}{\, | \,}

\title{A blueprint for constructing 3-pass AKE protocols under commitment-based models}
\author[1,2]{\rm Rodrigo Martín Sánchez-Ledesma \Letter}
\affil[1]{Universidad Complutense de Madrid}
\affil[ ]{\texttt {rodrma01@ucm.es}}
\affil[2]{Indra Sistemas de Comunicaciones Seguras}
\affil[ ]{\texttt {rmsanchezledesma@indra.es}}

\begin{document}

\maketitle
\begin{abstract} 
The commitment-based AKE model of \cite{RMSL:2026:AKE_COM} provides a formal security framework for key exchange protocols that avoid long-term cryptographic material, achieving authentication through a final out-of-band verification of session-derived values. Within this model, secure KA-based and KEM-based protocols were previously constructed via a commitment-based MT compiler, yielding optimized 4-pass protocols. In this work, we show that 3-pass protocols secure under this model exist for both primitives. These protocols are constructed ad hoc, following the core ideas of the commitment-based MT authenticator, and their SK security in the unauthenticated model is proved using the same game-based techniques, achieving bounds of the same form as those of \cite{RMSL:2026:AKE_COM}. The resulting protocols provide one-way authentication in three message exchanges.
\end{abstract}

\section{Introduction}
Secure key exchange over unauthenticated channels is one of the central problems in cryptography. The foundational works of Bellare, Canetti, and Krawczyk [BCK98] and Canetti and Krawczyk [CK01] formalize this problem through a two-layer model: an Authenticated Model (AM), in which messages are delivered faithfully, and an Unauthenticated Model (UM), which permits an active adversary to intercept, modify, and forge messages. Security in the UM is constructed from security in the AM via the notion of a compiler, relying on an initial authenticated exchange of long-term cryptographic material.
The work of \cite{RMSL:2026:AKE_COM} proposes an alternative AKE model that departs from this paradigm in a key respect: instead of an initial authenticated exchange of cryptographic keys, authentication is achieved through a final authenticated verification of session-derived values. This value, called a session Authentication Value (AV), is a deterministic digest of the elements shared between parties throughout the protocol run, verified at the end of the exchange via an out-of-band finalization function $I_f$. The model is intended for settings where long-term cryptographic material is unavailable, impractical, or undesirable, and is particularly suited to ephemeral, session-specific protocols of the kind found in real-time secure communications.
Within this framework, \cite{RMSL:2026:AKE_COM} constructs secure KA-based and KEM-based protocols by applying a commitment-based MT compiler to AM-secure 2-pass protocols. Each application yields a 3-message authenticated exchange, which applied to both messages of the AM protocol produces a 6-pass UM protocol, reducible to 4 passes via message parallelization. At the other extreme, \cite{RMSL:2026:AKE_COM} also establishes that 2-pass protocols cannot be made secure in the UM under this model regardless of the primitive employed: the adversary retains sufficient freedom to mount a successful AV-collision attack. This leaves 3 passes as the only unexplored case, and one that cannot be reached by a direct compiler application — a single use of the compiler authenticates only one of the two AM messages, leaving the other exposed. Closing this gap therefore requires going beyond the compiler and designing an exchange that covers both AM messages within a single commitment-based interaction.
\subsection{Our contribution}
We construct 3-pass protocols secure under the commitment-based AKE model of \cite{RMSL:2026:AKE_COM}, for both KA-based and KEM-based primitives, and prove their SK security in the unauthenticated model.
The protocols are designed ad hoc, following the same core ideas as the commitment-based MT authenticator of \cite{RMSL:2026:AKE_COM}, but without arising as a direct compiler output. The key design principle, shared by both constructions, is to have the initiator commit to a value before the responder has contributed any information, locking the initiator to its intended input prior to observing the responder's contribution. In the KA-based protocol, the initiator commits to its own public key. In the KEM-based protocol, committing to the public key is not viable — doing so would force the responder to wait for the opening before encapsulating, requiring an additional pass — so instead the initiator commits to a fresh random nonce that is subsequently included in the AV computation. In both cases, the opening of the commitment is transmitted in the final message, and both parties compute an AV over all public elements of the exchange, verified through $I_f$.

The security analysis follows the same game-based methodology as \cite{RMSL:2026:AKE_COM}. For each protocol, we prove an emulation result showing that the 3-pass UM protocol $\epsilon$-emulates the corresponding 2-pass AM protocol, with the emulation bound expressed in terms of the hiding, binding, strong-binding, and collision-resistance advantages of the underlying commitment scheme and the output length of the AV random oracle. SK security in the UM then follows from the general emulation theorem of \cite{RMSL:2026:AKE_COM}. The bounds achieved are of the same form as those of the 4-pass protocols.
We note that both protocols provide one-way authentication only: the responder authenticates the identity of the initiator, but not vice versa. This is an inherent structural consequence of the 3-pass design, as a single AV verification suffices for one direction while mutual authentication would require covering both, necessitating additional message rounds. This stands in contrast to the 4-pass protocols of \cite{RMSL:2026:AKE_COM}, which achieve mutual authentication.

\subsection{Organization}
Section 2 recalls the relevant definitions and results from \cite{RMSL:2026:AKE_COM}, including the commitment-based AKE model, the security notions for commitment schemes, and the emulation theorem that underpins SK security in the UM. Section 3 presents the 3-pass KA-based protocol and its security analysis. Section 4 presents the 3-pass KEM-based protocol and the corresponding analysis. Section 5 concludes.

\section{Preliminaries}

\begin{definition}
A key encapsulation mechanism is a triple of algorithms KEM = (KeyGen, Encaps, Decaps). The key generation algorithm KeyGen generates a key pair $(\pk, \sk)$. The encapsulation algorithm Encaps, given a public key value $\pk$, outputs the pair (ct, k), where \textit{ct} is called the encapsulation of a random value \textit{x} that determines the shared key K. The deterministic decapsulation algorithm Decaps, given the secret key $\sk$ and the encapsulation, outputs the same key K by extracting the random value from the encapsulation \textit{ct}.
\end{definition}
\begin{definition}
(Indistinguishability against KEM scheme) We define the IND-\textit{atk} game, $atk \in \{CPA, CCA\}$ as in the figure below and the IND-\textit{atk} advantage of an adversary $\mathcal{A}$ against the above KEM scheme as 
\begin{align*}
    \advantage{\texttt{IND-atk}}{\texttt{KEM}}[(\mathcal{A})] := 2 \cdot\abs{\prob{\texttt{IND-atk$^{\mathcal{A}}$} \Rightarrow 1} - \frac{1}{2}}
\end{align*}

\begin{pchstack}[boxed, center, space=1em]
  {\procedure[linenumbering]{$\indcpa^\adv$ $\pcbox{\indcca^\adv}$}{\phantomsection\label{kemgame}
      (\pk,\sk) \sample \texttt{KGen}  \\
      b \sample \bin  \\
      (ct^*, K_0^*) \sample \texttt{Encaps}(\pk)  \\
      K_1^* \sample \textit{K}  \\
      b' \sample \adv^{\pcbox{Decaps}}(\pk, ct^*, K_b^*) \\
      \pcreturn b = b'
    }}

  {\procedure[linenumbering] {Oracle $Decaps(ct)$}{%
        \text{if ct = ct$^*$} \\
        \text{ return $\bot$} \\
        else \\
        \text{return \texttt{Decaps}(sk,ct)}
   }}
\end{pchstack}
\end{definition}

\begin{definition}
A Commitment Scheme is a triple (Setup, Com, Open) such that:
\begin{itemize}
    \item $CK \sample \texttt{Setup}(\secparam)$ generates the public commitment context. It is often omitted mentioning the public context CK when clear.
    \item for any $ m \in M$, message space, $(c, d) \sample \texttt{Com}$($m$) is the commitment/opening pair for $m$. $c$ = $c(m)$ serves as the commitment value, and $d$ = $d(m)$ as the opening value.
    \item Open($c$, $d$) = $m'\in M \cup \{\bot\}$, where $\bot $ is returned if c is not a valid commitment to any message.
\end{itemize}
We define the hiding and binding games and advantages of an adversary $\mathcal{A}$ and security parameter \textit{n} as follows:
\begin{pchstack}[boxed, center, space=1em]
      {\procedure[linenumbering]{Hiding$^\adv(n)$}{\phantomsection\label{Hiding}
      CK \sample \texttt{Setup}(\secparam) \\
      (x_0, x_1) \sample \adv_1 (\secparam)  \\
      b \sample \bin  \\
      (c(x_b), d(x_b)) \sample \texttt{Com}(x_b)  \\
      b' \sample \adv_2(c(x_b)) \\
      \pcreturn b = b'
    }}
      {\procedure[linenumbering]{Binding$^\adv(n)$}{\phantomsection\label{Binding}
      CK \sample \texttt{Setup}(\secparam) \\
      (c, d, d') \sample \adv (\secparam)  \\
      m = Open(c, d)  \\
      m' = Open(c, d')  \\
      \pcreturn (m \neq m') \wedge (m, m' \neq \bot)
    }}
\end{pchstack}
\begin{center}
    \begin{align*}
        \advantage{\texttt{Hiding}}{\texttt{CS}}[(\mathcal{A}, n)] := 2 \cdot\abs{\prob{\texttt{Hiding$^\adv$} \Rightarrow 1} - \frac{1}{2}} \leq \text{negl(n)} \\
        \advantage{\texttt{Binding}}{\texttt{CS}}[(\mathcal{A}, n)] := \prob{\texttt{Binding$^\adv$} \Rightarrow 1} 
        \leq \text{negl(n)} \\
    \end{align*}
\end{center}
\end{definition}

\begin{definition} \cite{RMSL:2026:AKE_COM}
A Robust Commitment Scheme (RCS) is defined to be a Commitment Scheme such that it verifies the following additional security properties:
\begin{pchstack}[boxed, center, space=1em]
      {\procedure[linenumbering]{Strong-Binding$^\adv(n)$}{\phantomsection\label{Strong-Binding}
      CK \sample \texttt{Setup}(\secparam) \\
      (c, d, d') \sample \adv (\secparam)  \\
      m = Open(c, d)  \\
      m' = Open(c, d')  \\
      \pcreturn (m = m') \wedge (d' \neq d) \wedge (m \neq \bot)
    }}
      {\procedure[linenumbering]{CR$^\adv(n)$}{\phantomsection\label{CR-CS}
      CK \sample \texttt{Setup}(\secparam) \\
      (c, c', d) \sample \adv (\secparam)  \\
      m = Open(c, d)  \\
      m' = Open(c, d')  \\
      \pcreturn (m, m' \neq \bot) \wedge (c' \neq c)
    }}
\end{pchstack}
\begin{center}
    \begin{align*}
        \advantage{\texttt{Strong-Binding}}{\texttt{CS}}[(\mathcal{A}, n)] := \prob{\texttt{Strong-Binding$^\adv$} \Rightarrow 1} 
        \leq \text{negl(n)} \\
        \advantage{\texttt{CR}}{\texttt{CS}}[(\mathcal{A}, n)] := \prob{\texttt{CR$^\adv$} \Rightarrow 1} 
        \leq \text{negl(n)} \\
    \end{align*}
\end{center}
\end{definition}

\begin{definition}
    A cryptographic hash function (CHF) is a function $H: \{0,1\}^{\infty} \rightarrow \{0,1\}^n$, which takes arbitrary length inputs and returns outputs of fixed length $n$. Furthermore, they verify:
    \begin{itemize}
        \item Preimage resistance: Given $y \in \{0,1\}^n$, it is hard to find $x$ such that $H(x) = y$.
        \item Second preimage resistance: Given an input $x$, it is hard to find $x' \neq x$ such that $H(x) = H(x')$.
        \item Collision resistance:  It is hard to find two inputs $x \neq x'$ such that $H(x) = H(x')$.
    \end{itemize}
\end{definition}

Note that, in the above definition, it remains to define what \textit{hard} means. For the sake of this definition, we will consider that to mean that it requires efforts comparable to a brute-force search, i.e. $O(2^n)$. This definition implies that, if $n$ is chosen to be small enough, this wont actually be 'hard' in the computational sense.

Throughout this work, we will model the random functions employed in the commitment model as random oracles. This is done to simplify the proofs when considering the probability of a random value $x$ to have a certain value $H(x) = y$, which under the ROM will be $2^{-n}$. We make the convention that $H(x_1 \conc ... \conc x_n) = \bot$ if any $x_i = \bot$.

Moreover, as per \cite{HHK:2017:MAF}, we make the convention that, in order to keep
record of the queries issued to $H$, we will use a hash list $L_H$ that contains all tuples $(x,H(x))$ of arguments that H was queried on and the respective answers $H(x)$. This way, we can ensure that the number $q$ of queries the adversary makes are different, and that the random oracle is deterministic.

\begin{pchstack}[boxed, center, space=1em]
  {\procedure[linenumbering] {Oracle $H(m)$}{%
        \text{if $\exists r$ s.t $(m, r) \in L_H$} \\
        \text{ return $r$} \\
        r \sample \{0, 1\}^N \\
        L_H = L_H \cup \{(m, r)\} \\
        \text{return $r$}
   }}
\end{pchstack}

Now, we define the central element of the authentication security of our model to give a proper definition for the element that will be used to detect session interference by an attacker:
\begin{definition}\label{AV_G}
AV are constructed as
\begin{align*}
    G(A_1, ..., A_j) := G(enc(A_1) \conc ... \conc enc(A_j))
\end{align*}
where $A_i$ , $i \in \{1, .., j\}$ are elements known to both parties at the end of the protocol session, G is a CHF and $enc$ is an encoding that ensures that collisions cannot happen between inputs. In other words, $enc(x_1) \conc enc(y_1) = enc(x_2) \conc enc(y_2) \iff (x_1, y_1) = (x_2, y_2).$

For the purpose of this work, we will define $enc(A_j)$ as $j \conc len(A_j) \conc A_j$, where $len(x)$ is the length operator. This represents a standard Type-Length-Value (TLV) encoding.
\end{definition}
Within each protocol, the specific values that will conform this AV session value will be established.

We also employ an interesting property that will appear repeatedly throughout the work, defined as the \textit{Combined Collision} advantage of an adversary against a CHF:
\begin{definition}[Combined collision advantage] \cite{RMSL:2026:AKE_COM} Let $l \in \mathbb{N}$ and $\mathcal{Y}$ be an algorithm that returns $y \in Y$ with a certain probability and $\bot$ otherwise. We define the Combined Collision game and the Combined Collision advantage of a PPT adversary $\mathcal{A}$ against a CHF \textbf{G} as 
\begin{align*}
    \advantage{\texttt{combined}}{\texttt{CHF}}[(\mathcal{A}, l, \mathcal{Y})] := \prob{\texttt{combined$^{\mathcal{A}}(l, Y)$} \Rightarrow 1}
\end{align*}

\begin{pchstack}[boxed, center, space=1em]
  {\procedure[linenumbering]{$combined^\adv(l, Y)$}{\phantomsection\label{combined_game}
      T \sample \bin^l \\
      y \sample \adv^{G, \mathcal{Y}}(T) \\
      \pcreturn (G(y) = T) \wedge (T \neq \bot) \wedge (y \in Y \cup \{\bot\})
    }}
\end{pchstack}
where $l$ represents the output length of G and Y represents the domain from which $y$ must be generated.
\end{definition}

The following results provides more details regarding the actual advantage against the above game:
\begin{proposition}\label{combined_adv}\cite{RMSL:2026:AKE_COM}
    For any PPT adversary $\adv$ capable of at most $q$ queries to a CHF random oracle $\mathcal{O}$, it holds that
    $$\advantage{\texttt{combined}}{\texttt{CHF}}[(\mathcal{A}, l, \mathcal{Y})] \leq \frac{q}{2^l} \cdot \delta$$
    where $l$ represents the output length of $\mathcal{O}$, and $\delta := \advantage{\texttt{sample}}{\mathcal{Y}}[(\mathcal{A})]$ represents the probability of algorithm $\mathcal{Y}$ to return $y \in Y$.
\end{proposition}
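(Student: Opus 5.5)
The proof of Proposition~\ref{combined_adv} is a union bound over the adversary's random oracle queries, combined with conditioning on the event that the sampler $\mathcal{Y}$ succeeds. The target $T$ is uniform in $\bin^l$ and drawn independently of $G$, so $\adv$'s only handle on $T$ is through its at most $q$ queries to $G$. A returned value $y$ can win only if $y \neq \bot$: by the convention $H(\cdots\conc \bot \conc\cdots)=\bot$ we get $G(\bot)=\bot$, while the winning condition requires $T \neq \bot$. We may also assume without loss of generality that $\adv$ has queried $G(y)$ on its output $y$, which costs at most one extra query and only affects the constant.

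The argument then proceeds in three steps.

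\begin{enumerate}
\item Let $E$ be the event that the $y$ finally output lies in $Y$, i.e.\ that it was produced through a successful run of $\mathcal{Y}$. The winning event is contained in $E \wedge (G(y)=T)$. I interpret $\delta$ as the probability of $E$, and I will argue that the randomness deciding whether $\mathcal{Y}$ succeeds is independent of the oracle answers to fresh points.
\item Fix the $q$ query points $x_1,\dots,x_q$ asked to the lazily sampled oracle of the preliminaries. Each fresh answer $G(x_i)$ is uniform and independent of $T$ and of $\adv$'s view up to that query. Hence, conditioned on $E$, the probability that some queried $x_i \in Y$ satisfies $G(x_i)=T$ is at most $\sum_i \Pr[G(x_i)=T] \le q/2^l$.
\item Multiply out: $\Pr[\text{win}] \le \Pr[E]\cdot \Pr[\exists i: G(x_i)=T \mid E] \le \delta\cdot q/2^l$.
\end{enumerate}

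The main obstacle is making the factorisation in step~3 rigorous. The adversary could correlate its queries with whether $\mathcal{Y}$ succeeds, or query $G$ adaptively after seeing $T$. To handle this, I would first try to reformulate the game by having $G$'s answers sampled lazily and independently of $\mathcal{Y}$'s coins. Then $E$ depends only on $\mathcal{Y}$'s coins, and each hit event depends only on a fresh uniform answer. If the probability of $E$ is itself adversary-dependent, I would take $\delta$ to be the supremum over adversary strategies, which is how $\advantage{\texttt{sample}}{\mathcal{Y}}$ is presumably defined in \cite{RMSL:2026:AKE_COM}. This yields the stated bound up to replacing $q$ by $q+1$, and the extra query can be absorbed into $q$ by convention.
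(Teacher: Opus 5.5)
The paper gives no proof of this proposition; it is imported from \cite{RMSL:2026:AKE_COM}, so your argument has to stand on its own, and its central step fails. The factorisation $\Pr[\mathrm{win}]\le\Pr[E]\cdot\Pr[\exists i: G(x_i)=T\mid E]$, with the second factor at most $q/2^l$, is false, not merely unrigorous. The reason is that $\mathcal{A}$ decides $E$ after seeing oracle answers, so conditioning on $E$ biases those answers. For a counterexample, let $\mathcal{A}$ run $\mathcal{Y}$ once (success probability $\delta_0$), query $G(y)$, and output $y$ if $G(y)=T$ and $\bot$ otherwise. Then $\Pr[\mathrm{win}]=\Pr[E]=\delta_0/2^l$ and the conditional hit probability in your step 2 equals $1$. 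Under your reading $\delta=\Pr[E]$, the claimed bound $q\delta_0/2^{2l}$ is violated for every $q<2^l$.

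Lazy sampling does not make $E$ a function of $\mathcal{Y}$'s coins. $\mathcal{A}$'s output may depend on $G$'s answers, and so may the time and the target on which it runs $\mathcal{Y}$. Taking $\delta$ as a supremum over strategies repairs the statement in this example, but not your step 2. Replacing $E$ by the event $S$ that the run of $\mathcal{Y}$ succeeds does not save the conditioning either. Consider the attack that actually matters in this paper: $\mathcal{A}$ first queries $G$ on candidate values and, only after finding a preimage of $T$, runs $\mathcal{Y}$ aimed at opening to it ($\mathcal{Y}$ is a binding adversary, like $\mathcal{Y}_1,\mathcal{Y}_2$). Then $\Pr[\text{hit}\mid S]=1$; the factor $2^{-l}$ is paid before the run and the factor $\delta$ after it.

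Two things are needed to fix the proof.
\begin{enumerate}
\item \textbf{A precise meaning of $\delta$.} Take $\delta$ to be an upper bound on the success probability of a run of $\mathcal{Y}$ conditioned on any history of $\mathcal{A}$, with a single run allowed. With $k$ aimed runs the attack above wins with probability about $kq\delta/2^l$, so otherwise $\delta$ must bound the probability of ever obtaining an element of $Y$.
\item \textbf{A per-query union bound that respects time order}, instead of conditioning on the final outcome. Let $H_i$ be the event that the $i$-th query is fresh and answered with $T$, and $B_i$ the event that it is issued before $\mathcal{Y}$ runs. Then $\mathrm{win}\subseteq S\cap\bigcup_i H_i$, treating the game's own evaluation of $G(y)$ as an extra query if needed.
\end{enumerate}

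If $B_i$ holds, $H_i$ is already decided when $\mathcal{Y}$ runs, and $B_i$ is decided before the $i$-th answer is sampled. Hence $\Pr[S\cap H_i\cap B_i]\le\delta\Pr[H_i\cap B_i]\le\delta 2^{-l}\Pr[B_i]$.

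If $B_i$ fails, $S$ is already decided before the $i$-th answer is sampled, and $\neg B_i$ is decided when $\mathcal{Y}$ runs. Hence $\Pr[S\cap H_i\cap\neg B_i]\le 2^{-l}\Pr[S\cap\neg B_i]\le\delta 2^{-l}\Pr[\neg B_i]$.

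So $\Pr[S\cap H_i]\le\delta 2^{-l}$ for every $i$. Summing over the queries gives $q\delta/2^l$, or $(q+1)\delta/2^l$ without your convention that $\mathcal{A}$ queries its own output.
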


The following result is a generalization of \cite[Theorem 2]{BdKM:2023:MDK}, which can be applied to the CS-based AKE model as well, which states that the SK-security of a protocol in the UM can be given in terms of the SK-security of another protocol in the AM and the degree of emulation between them:
\begin{theorem}
\label{ThEmulSK}
Let $\pi$ be an $\epsilon$-SK-secure protocol in the AM and let  $\pi'$ a protocol that $\alpha$-emulates $\pi$ in the UM model. Then, $\pi'$ is an $\epsilon'$-SK-secure protocol in the UM model, with $\epsilon' = \epsilon + \alpha$.
\end{theorem}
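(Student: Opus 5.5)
The plan is a direct hybrid argument over SK-security experiments. Fix a UM adversary $\mathcal{A}$ against $\pi'$, and write $\mathrm{Adv}^{\mathrm{UM}}_{\pi'}(\mathcal{A})$ for its SK advantage. By the definition of $\alpha$-emulation, there is an AM adversary $\mathcal{S}$ against $\pi$ such that, for every distinguisher, the global outputs $\mathrm{AUTH}_{\pi,\mathcal{S}}$ and $\mathrm{UNAUTH}_{\pi',\mathcal{A}}$ differ by at most $\alpha$. The simulator $\mathcal{S}$ is typically built by running $\mathcal{A}$ internally and translating its UM actions (message injections, session activations, reveal and corrupt queries, and the out-of-band $I_f$ verifications) into AM actions. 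The aim is to show
\begin{align*}
  \mathrm{Adv}^{\mathrm{UM}}_{\pi'}(\mathcal{A}) \leq \mathrm{Adv}^{\mathrm{AM}}_{\pi}(\mathcal{S}) + \alpha \leq \epsilon + \alpha .
\end{align*}

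The key steps, in order, are as follows.

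\begin{enumerate}
  \item Recall that SK-security has two components.
  \begin{itemize}
    \item \emph{Correctness/consistency:} matching completed sessions output the same key, except with some probability.
    \item \emph{Indistinguishability:} the test-session key is indistinguishable from random, measured by the guessing advantage $|\Pr[b'=b]-1/2|$.
  \end{itemize}
  \item Encode each component as an efficient distinguisher acting on the global output, i.e.\ the transcript of session outputs together with the adversary's output bit $b'$.
    \begin{itemize}
      \item The consistency distinguisher outputs $1$ iff two matching sessions output different keys.
      \item The test distinguisher outputs $1$ iff the adversary's guess equals the test bit.
    \end{itemize}
    For the test bit to be part of the compared outputs, I will include the test-session challenge in the experiment in the same way in both worlds. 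This is the convention of \cite{BdKM:2023:MDK}, which I will take as the template.
  \item Apply the emulation bound to each distinguisher. This gives
    \begin{align*}
      \bigl|\Pr[\text{UM event}] - \Pr[\text{AM event}]\bigr| \leq \alpha .
    \end{align*}
  \item Combine with the $\epsilon$-SK-security of $\pi$ in the AM. The AM adversary $\mathcal{S}$ has consistency failure probability and guessing advantage both at most $\epsilon$. Hence the UM quantities are each at most $\epsilon+\alpha$, which is exactly $\epsilon'$-SK-security of $\pi'$.
\end{enumerate}

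The main obstacle is the setting-specific part of this argument. In the commitment-based model, sessions complete only after the $I_f$ verification succeeds, so the notion of matching sessions and the exposure rules in the UM must correspond to those in the AM under the simulation. If they do not, the events compared in the last step are not the same event. I will handle this by first checking that the emulation definition of \cite{RMSL:2026:AKE_COM} compares outputs including session identifiers and completion status. If it does, matching is determined by the compared outputs, and a single distinguisher per property suffices.

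A further subtlety is factor-of-two bookkeeping. If the advantage is defined as $2|\Pr[b'=b]-1/2|$, the test distinguisher only bounds $\Pr$ differences. I would define $\epsilon$ and $\alpha$ on the same scale, as in \cite[Theorem 2]{BdKM:2023:MDK}, so the sum is exactly $\epsilon+\alpha$ and no additional constant appears.
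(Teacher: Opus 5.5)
Your proposal is correct and follows essentially the paper's route. The paper's proof only invokes the argument of \cite[Theorem 2]{BdKM:2023:MDK} and asserts that it carries over, while you spell that argument out: take the emulating AM adversary, then apply the emulation bound to one efficient distinguisher per SK condition. The checks you single out are exactly what the paper takes for granted when it says no CK-specific or compiler-specific feature is used: that session matching, completion after $I_f$, exposure status and the test challenge appear identically in both global outputs, and that $\epsilon$ and $\alpha$ are on the same scale. On the last point, note that the scale is fixed by the definitions rather than chosen; if the SK advantage carries a factor of $2$, the same argument gives only $\epsilon + 2\alpha$, which is the form the paper itself uses when it applies the theorem.
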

\begin{proof}
    The exact proof of \cite[Theorem 2]{BdKM:2023:MDK} can be applied to prove this result as well, as it employs no feature specific to the CK-based AKE model that does not translate to the CS-based AKE model nor any feature about $\pi'$ being the result of the application of a compiler.
\end{proof}

\section{An alternative 3-pass KA-based protocol}
This section will present an alternative KA-based protocol constructed from 3-passes, in an attempt to improve the optimized 4-pass KA-based protocol presented in \cite{RMSL:2026:AKE_COM}.

\subsection{A secure 2-pass KA-based protocol in the AM}
For a key exchange of any kind to take place, the minimum number of messages than needs to be exchange is two. In the case of KA-based protocols based on the CS-based AKE model introduced in \cite{RMSL:2026:AKE_COM}, a generic 2-pass protocol would look like this:

\begin{pchstack}[boxed, center, space=1em]
    \procedure{Generic 2-pass KA-based protocol}{%
     \textbf{Alice} \> \> \textbf{Bob} \\
     (s_{k_a}, p_{k_a}) \sample \texttt{KGen} (\secparam) \> \> \\
     \> \sendmessageright{top=\text{(p$_{k_a}$, s)}} \> \\
     \> \> (s_{k_b}, p_{k_b}) \sample \texttt{KGen} (\secparam) \\ 
     \> \> K_{ab} = \texttt{KA}(p_{k_a}, s_{k_b}) \\ 
     \> \sendmessageleft{top=\text{(p$_{k_b}$, s)}} \> \\ 
     K_{ba} = \texttt{KA}(p_{k_b}, s_{k_a}) \> \> }
\end{pchstack}

In particular, for a DH-based KA scheme, the following proposition gives a bound on the security of a 2-pass construction, in the AM:
\begin{proposition}\cite{CK:2001:AKE}\label{sk_security_2_pass_dh}
    Let $\adv$ be an adversary against the SK-security of the above protocol, which interacts with at most $l$ sessions. Then the 2-pass DH protocol is $\epsilon$-SK-secure in the AM, with 
    $$ \epsilon \leq l \cdot \advantage{DDH}{DH}$$
\end{proposition}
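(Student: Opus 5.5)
The plan is the standard hybrid argument of \cite{CK:2001:AKE} for the AM: reduce SK-security of the 2-pass DH protocol to DDH by guessing which session the adversary will choose as its test session. Fix an adversary $\adv$ that activates at most $l$ sessions. In the AM, messages are delivered faithfully, so the test session and its matching session share the honestly generated transcript $(g^x, g^y, s)$. Here $x$ and $y$ are the ephemeral secrets sampled by Alice and Bob, and the session key is $K = g^{xy}$. SK-security asks two things: matching sessions must output the same key, and $\adv$ must not distinguish the test key from random. The first condition is immediate from the correctness of \texttt{KA}, that is, $(g^x)^y = (g^y)^x$. So only the indistinguishability condition needs work.

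The reduction $\mathcal{D}$ receives a DDH instance $(g, g^a, g^b, Z)$, where either $Z = g^{ab}$ or $Z$ is a uniformly random group element. It picks $r \in \{1,\dots,l\}$ uniformly at random and simulates the whole AM experiment for $\adv$. All sessions except the $r$-th are run honestly with freshly sampled exponents. In the $r$-th session, $\mathcal{D}$ sets Alice's public value to $g^a$ and Bob's to $g^b$. Session-state reveals, key reveals and corruptions of other sessions are answered from $\mathcal{D}$'s own knowledge. No other session depends on $a$ or $b$, because ephemeral keys are fresh in every session. If $\adv$ chooses a test session other than $r$, or exposes session $r$ or its match (which the freshness condition forbids for the test session anyway), then $\mathcal{D}$ aborts and outputs a random bit. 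Otherwise $\mathcal{D}$ answers the test query with $Z$ and outputs $1$ exactly when $\adv$ guesses ``real''.

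Conditioned on correctly guessing $r$, which happens with probability $1/l$ independently of $\adv$'s view, the simulation is perfect. When $Z = g^{ab}$, $\adv$ sees the real experiment. When $Z$ is random, $\adv$ sees the random-key experiment. The distinguishing gaps therefore satisfy $\advantage{DDH}{DH}[(\mathcal{D})] \ge \frac{1}{l}\,\epsilon_{\adv}$, which rearranges to $\epsilon \le l \cdot \advantage{DDH}{DH}$.

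The main obstacle is making sure exposures of other sessions never require knowledge of $a$ or $b$, and that the abort step is handled cleanly. This rests on sessions being independent and on the test session being unexposed. Some care is also needed with normalization, so that the factor of $2$ in the advantage definitions lines up.
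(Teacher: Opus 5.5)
Your proposal is correct and is essentially the standard reduction of \cite{CK:2001:AKE}, which the paper cites for this proposition without reproducing a proof. That reduction guesses the test exchange among the $l$ sessions, embeds the DDH challenge $(g^a, g^b, Z)$ there, aborts with a random bit on a wrong guess, and so loses exactly a factor $l$. One small precision: no other session's key depends on $a$ or $b$ because the AM delivers each message faithfully and at most once, so $g^a$ and $g^b$ only ever reach the two parties of exchange $r$; freshness of ephemeral keys alone would not give you this.
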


However, it was shown in \cite{RMSL:2026:AKE_COM} that a 2-pass KA-based protocol cannot be secure in the UM, regardless of the KA scheme employed:

\begin{proposition}\cite{RMSL:2026:AKE_COM}
The 2-pass KA-based protocol has SK-security in the UM model, regardless of the KA scheme employed, bounded by the following value:
\begin{align*}
    \advantage{\texttt{Key-Ind}}{\pi_{UM}}[(\adv)] \geq 1 - \left(1 - \frac{1}{2^l}\right)^{\min\{q, |Y|\}}
\end{align*}
where $l$ is the output length of the AV random oracle, $q$ is the maximum number of queries the adversary is able to make against the AV random oracle and $Y: = \{(\pk, K) : (\sk, \pk) \sample \texttt{KGen}(), K \sample \texttt{KA}(\pk_p, \sk)\}$ is the domain of all possible combinations of valid public keys and shared secrets with public key $\pk_p$ of the KA scheme.
\end{proposition}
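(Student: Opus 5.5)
The plan is to exhibit an explicit adversary against the 2-pass KA-based protocol in the UM that mounts an AV-collision attack, and then lower-bound its success probability. The idea is a man-in-the-middle attack. The adversary intercepts Alice's message $(p_{k_a}, s)$ and replaces it with its own public key. It then intercepts Bob's reply and replaces it with a public key of its own choosing. The goal is that the AV computed by Alice, $G(\ldots, p_{k_a}, p_{k_e'}, K_{\text{Alice}}, \ldots)$, coincides with the AV computed by Bob, $G(\ldots, p_{k_e}, p_{k_b}, K_{\text{Bob}}, \ldots)$. If they coincide, the out-of-band check $I_f$ passes and both sessions complete. The adversary then holds both session keys, since it knows its own secret keys. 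It can therefore issue a Test query on either session and answer correctly with probability $1$, giving Key-Ind advantage $1$ conditioned on the collision.

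The key steps, in order, are as follows.

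\begin{enumerate}
\item Fix the AV inputs of the generic protocol: the public values exchanged and, possibly, the derived key. Observe that Bob's AV is fully determined once Bob answers, so the adversary knows the target $T$.
\item The adversary now repeatedly samples fresh pairs $(\sk, \pk) \sample \texttt{KGen}()$. For each, it computes $K = \texttt{KA}(p_{k_a}, \sk)$ (here $\pk_p = p_{k_a}$) and queries $G$ on the resulting candidate AV input for Alice's side. Each candidate is an element $y \in Y$, and these candidates are distinct.
\item In the ROM each fresh query hits $T$ independently with probability $2^{-l}$. The adversary can make at most $\min\{q, |Y|\}$ distinct useful queries. The probability that none hits is therefore $(1 - 2^{-l})^{\min\{q,|Y|\}}$, which gives the stated bound once success is shown to yield advantage $1$.
\item On a hit, forward the colliding $\pk$ to Alice, so both parties accept. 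The two sessions are not matching, so Testing one and revealing nothing breaks key indistinguishability.
\end{enumerate}

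This is exactly the combined-collision game of the Combined collision advantage definition, played in the favourable direction. Proposition~\ref{combined_adv} gives the matching upper bound, whereas here we need the lower bound from distinct queries.

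The main obstacle is justifying that the adversary's win translates into advantage $1$, rather than $1/2$ plus something, within the model's freshness rules. The Tested session must be unexposed and must have no matching partner, even though the AV values coincide. A further point is whether the model defines matching via equal AVs; I would argue it does not, since it uses the session transcript. A secondary subtlety is that the queries must avoid repetition, which is guaranteed by the hash list $L_H$ convention, and that they must remain inside $Y$. This is what caps the count at $\min\{q,|Y|\}$. I would handle it by having the adversary enumerate distinct key pairs, and stop after $|Y|$ of them.
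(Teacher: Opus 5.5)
The paper does not prove this statement itself. It quotes it from \cite{RMSL:2026:AKE_COM}, and the only hint of the argument is the introduction's remark that with two passes the adversary can still mount an AV-collision attack. Your man-in-the-middle adversary is exactly such an attack, and its counting gives the form of the bound. Replacing Alice's first message with your own key makes every input of Bob's AV known to you, including $K_{\mathrm{Bob}}$ if the shared secret is hashed. Grinding distinct elements of $Y$ with $\pk_p = p_{k_a}$ then gives independent chances of $2^{-l}$ each of hitting $T$.

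A few details need tightening.
\begin{itemize}
\item Learning $T$ costs one query to $G$. With a budget of $q$ queries your adversary therefore gets exponent $\min\{q-1,|Y|\}$. It reaches the stated $\min\{q,|Y|\}$ only if that query is not charged to $q$.
\item The step from success probability $p$ to advantage $p$ needs two things you leave implicit. On a miss the adversary outputs a uniform bit, and the advantage is normalised as $2\cdot|\Pr[b'=b]-\frac{1}{2}|$, as the paper does for IND-atk and Hiding. Then $\Pr[b'=b]=\frac{1}{2}+\frac{p}{2}$ and the advantage is $p$. This resolves what you call the ``main obstacle''.
\item Distinctness must be enforced on the oracle inputs, i.e.\ on the pairs $(\pk,K)$, not merely on key pairs. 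Distinct key pairs of a generic KA scheme can yield the same element of $Y$.
\item In the CK-style model inherited here, matching is defined by session identifiers, so the sessions $(A,B,s)$ and $(B,A,s)$ \emph{are} matching. Your claim that they are not, and that the test session must have no partner, is inaccurate. It is harmless, though: freshness only requires the test session and its partner to be unexposed, which they are. The attack also breaks the correctness clause of SK-security, since matching sessions complete with different keys.
\end{itemize}
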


\subsection{Reducing the number of passes}
In \cite{RMSL:2026:AKE_COM}, a direct application of the CS-based MT compiler to the above 2-pass protocol was employed to derive a secure KA-based protocol in the UM. 

This application yielded, after message optimizations, 4-passes and a two way authentication protocol (i.e. a protocol in which the two messages of the AM-secure 2-pass protocol are authenticated).

We define now an alternative AKE protocol from KA primitives constructed only under 3 passes, which apples directly the same ideas as the CS-based MT authenticator.
\begin{pchstack}[boxed, center, space=1em]
    \procedure{3-pass Commitment-based KA protocol}{%
     \textbf{Alice} \> \> \textbf{Bob} \\
     (\pk_A, \sk_A) \sample \texttt{KeyGen()} \> \> \\
     (c, d) \sample \texttt{Com($\pk_A$)} \> \> \\
     \> \sendmessageright{top=\text{c}} \> \\
     \> \>(\pk_B, \sk_B) \sample \texttt{KeyGen()} \\ 
     \> \sendmessageleft{top=\text{$\pk_B$}} \> \\ 
     K_{AB} = \texttt{KA}(\pk_B, \sk_A) \> \> \\
     E_A = \texttt{G}(B, c, \pk_B, \pk_A) \> \> \\
     \> \sendmessageright{top=\text{d}} \> \\
     \> \> \pk_A^{\prime} = Open(c, d) \\
     \> \> K_{BA} = \texttt{KA}(\pk_A^{\prime}, \sk_B) \\
     \> \> E_A^{\prime} = \texttt{G}(B, c, \pk_B, \pk_A^{\prime})
     }
\end{pchstack}
The idea behind the above protocol is the following:
\begin{itemize}
    \item Alice generates it own key pair, a commitment \textit{c} of the public key, and sends this commitment to Bob.
    \item Bob, upon reception of such message, just generates its key pair and transmits the public key to Alice. 
    \item When Alice receives Bob's public key, she can proceed to derive the shared secret, and liberates the opening \textit{d} associated with the previous commitment, so Bob can access Alice's public key.
    \item Both parties generate a summary of the exchange, by deriving an AV from the commitment \textit{c}, the public keys of both parties and the identity of the intended recipient $B$, which is added to ensure that an attacker cannot redirect legitimate messages to other uncorrupted not-intended destinations. Once the exchange is finished, the responder will validate its AV against the one obtained by Alice, in a secure and authenticated way, as modeled by $I_f$.
\end{itemize}
The SK-security of the above protocol resides in the security properties of commitment schemes and the elements that conform the AV. 

In order to formalize its SK-security, we need first to prove that the above protocol emulates the generic 2-pass KA-based protocol over unauthenticated networks.

\begin{proposition} \label{3_pass_2_pass_emulation_bound}
The generic 3-pass KA-based protocol, when instantiated with a secure Robust Commitment Scheme CS and a random oracle O, $\epsilon$-emulates the generic 2-pass KA-based protocol over unauthenticated channels with 
\begin{multline*}
\epsilon \leq l \cdot (\frac{3}{2^{n_{av}}} + \frac{q_O}{2^{n_{av}}} \cdot ((1 + \advantage{\texttt{Hiding}}{\texttt{CS}} \cdot \advantage{\texttt{CR}}{\texttt{CS}}) \cdot \advantage{\texttt{Hiding}}{\texttt{CS}} \\ 
+ 2 \cdot \advantage{\texttt{Binding}}{\texttt{CS}} +\advantage{\texttt{Strong-Binding}}{\texttt{CS}}))
\end{multline*} 
where $l = n_p^2 \cdot n_e$, $n_p$ is the number of parties running the protocol, $n_e$ the maximum number of exchanges initiated by each party, $q_O$ is the maximum number of queries that a PPT adversary issues to $O$ and $n_{av}$ the length of the random oracle $O$.
\end{proposition}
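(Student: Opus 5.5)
We follow the game-based emulation methodology of \cite{RMSL:2026:AKE_COM}. We construct a simulator that, for any UM adversary $\adv$ against the 3-pass protocol, runs $\adv$ internally and translates its actions into AM actions on the generic 2-pass protocol. The emulation holds unless $\adv$ makes some responder session accept, i.e. pass the $I_f$ check $E_A = E_A'$, while its view differs from an honest AM run with the intended partner. We call such an event \emph{interference}.

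The first step is a hybrid over sessions. There are at most $n_p^2 \cdot n_e$ ordered (initiator, responder, exchange) triples, so a union bound reduces the problem to a single target session. This gives the overall factor $l$. We then need to show that interference in one fixed session has probability at most
\[
\frac{3}{2^{n_{av}}} + \frac{q_O}{2^{n_{av}}}\bigl((1+\advantage{\texttt{Hiding}}{\texttt{CS}}\advantage{\texttt{CR}}{\texttt{CS}})\advantage{\texttt{Hiding}}{\texttt{CS}} + 2\advantage{\texttt{Binding}}{\texttt{CS}} + \advantage{\texttt{Strong-Binding}}{\texttt{CS}}\bigr).
\]

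The second step splits interference by which of the three messages $c$, $\pk_B$, $d$ the adversary altered, and bounds each case with a separate game hop.

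\textbf{Case (i): $c$ is replaced by some $c' \neq c$.} Bob's AV then contains $c'$. A match requires either a direct guess of the random-oracle output, costing $1/2^{n_{av}}$, or a combined-collision event. The latter is bounded by Proposition~\ref{combined_adv}, with $\delta$ equal to the probability that the adversary can produce a $c'$ it can later open consistently without knowing $\pk_A$. Since Alice's $c$ hides $\pk_A$, this probability is governed by the hiding advantage. Producing a distinct valid commitment is governed by the CR advantage. This is where the product term $\advantage{\texttt{Hiding}}{\texttt{CS}} \advantage{\texttt{CR}}{\texttt{CS}}$ multiplying hiding should come from.

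\textbf{Case (ii): $\pk_B$ is replaced.} Alice's AV then includes the forged key while Bob's includes his own. Equality forces an oracle collision, costing $1/2^{n_{av}}$, or a combined-collision event whose sampler is again controlled by hiding of $c$. The reason is that the adversary must choose the substitute before it learns $\pk_A$.

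\textbf{Case (iii): $d$ is replaced by $d'$.} Either $\mathrm{Open}(c,d')$ returns a different $\pk_A'$, which is a binding break, or it returns the same message with $d' \neq d$, which is a strong-binding break. In each sub-case there is an additional $1/2^{n_{av}}$ direct-guess term. The binding term appears twice because it arises both here and in combination with case (i), where a forged $c'$ must open validly.

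In each case the reduction embeds a challenge of the relevant game into the target session. The $q_O/2^{n_{av}}$ factor comes from applying Proposition~\ref{combined_adv} with the corresponding sampler probability $\delta$. Summing the three cases gives the three $1/2^{n_{av}}$ terms and the bracketed sum. Multiplying by $l$ then yields the stated bound. Combined with Theorem~\ref{ThEmulSK}, this result later gives SK-security.

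The main obstacle is case (i). There we must argue carefully that the adversary's ability to craft a substitute commitment, whose AV would collide with Alice's, is captured by a sampler $\mathcal{Y}$ whose success probability is bounded by $(1+\advantage{\texttt{Hiding}}{\texttt{CS}}\advantage{\texttt{CR}}{\texttt{CS}})\advantage{\texttt{Hiding}}{\texttt{CS}}$. The sampler has to be defined over the domain of tuples $(B, c', \pk_B, \pk_A)$ consistent with both views. My first attempt would be to mirror the corresponding lemma for the MT authenticator in \cite{RMSL:2026:AKE_COM}, conditioning on whether $c'$ is a fresh commitment or a mauled one.
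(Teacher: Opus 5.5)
Your outer structure matches the paper's: a per-session interference event, a union bound giving the factor $l$, one game hop per message, and Proposition~\ref{combined_adv} for the $q_O/2^{n_{av}}$ factors. The gap is that the central step is left open, and the plan you sketch for it would not work.

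\textbf{Case (i) and the role of CR.} The sampler you propose asks whether the adversary can \emph{produce a $c'$ it can later open consistently}. As stated, that is trivial: the adversary commits to a value of its own choosing and opens it. So hiding cannot bound it. What hiding actually controls is the ability to aim at Alice's AV, which contains $\pk_A$. Likewise, CR does not bound producing a distinct valid commitment, which is also trivial. In the paper, CR bounds the event that Alice's honest opening $d$ also opens a forged $c'\neq c$ to a non-$\bot$ value.

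\textbf{The missing idea: ordering the hops.} Your cases overlap, since the adversary may alter $c$, $\pk_B$ and $d$ together, so the order of the hops matters. The paper first removes the change to $d$ (game KA$_0$, where Bob opens whatever commitment he received with Alice's honest $d$). It then removes the change to $\pk_B$ (game KA$_1$). What remains is a game in which only $c$ and the recipient identity are altered. There, Bob's key is sampled after $(c',Q')$ is fixed and the adversary has no further move, so the distinct AV inputs collide only by chance, with probability $1/2^{n_{av}}$. This is exactly what committing first buys: case (i) on its own is the trivial case, not the main obstacle.

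\textbf{Where the product term comes from.} It arises in the $\pk_B$ hop. A targeted substitute for $\pk_B$ requires knowing $\pk_A$, which happens with probability at most $\advantage{\texttt{Hiding}}{\texttt{CS}}$. Given that, if $c'=c$ the search succeeds with probability at most $q_O/2^{n_{av}}$; this is the summand $1$. If $c'\neq c$, the honest $d$ must also open $c'$ (CR), and the resulting value must be recovered (the second hiding factor). You also charge case (ii) a hiding-controlled term and case (iii) two $1/2^{n_{av}}$ terms, so your tally as described exceeds the stated expression.

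\textbf{Tally of the $d$ hop.} In the paper both binding terms come from this hop. One arises when other components were also altered, and the final opening must contradict an opening of $c'$ the adversary already holds. The other arises when nothing else was altered, and the final opening must contradict Alice's $d$. The hop carries a single $1/2^{n_{av}}$ term.

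\textbf{Strong binding.} In this sub-case every AV input coincides, so the AVs agree with certainty. Proposition~\ref{combined_adv} therefore cannot supply a $q_O/2^{n_{av}}$ factor here. The paper's own Step 1 yields $\advantage{\texttt{Strong-Binding}}{\texttt{CS}}$ unmultiplied, and its placement inside the bracket of the statement is not backed by that derivation. You must either carry the term unmultiplied, or argue that this sub-case is not interference at all, since Bob still obtains Alice's genuine key.

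\textbf{Your case (i) intuition.} The scenario behind it is real: a forged $c'$ that the adversary opens itself, plus a substitute $\pk_B^*$ chosen after seeing $\pk_B$ once $\pk_A$ is recovered. But it belongs to the $d$ hop. The paper files it under $F_{1,1}$ and charges only $1/2^{n_{av}}$, overlooking the adaptive choice of $\pk_B^*$. A careful treatment of that hop should add a term of order $\advantage{\texttt{Hiding}}{\texttt{CS}}\cdot q_O/2^{n_{av}}$.
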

\begin{proof}
First, we will define the output "Q validated a KE request from P with public key $\pk$" as the event in which party Q receives the start of an exchange request from party P and public key $\pk$, and successfully follows it through. And the output “P started a KE request to Q with public key $\pk$” as the one in which party P sends an exchange request to party Q employing public key $\pk$.

An AM-adversary can perfectly simulate an UM-adversary unless the event \textit{'In the UM there is the output “Q validated a KE request from P with public key $\pk$” for some parties P and Q, but there was no previous output “P started a KE request to Q with public key $\pk$”, for uncorrupted parties P, Q'} happens. We first bound the probability of this event happening on any particular $(P, Q, \pk)$, and then limit the probability of distinguishing between adversaries by the probability of the above event happening on a particular exchange times the maximum number of exchanges generated.
    
Under the newly defined UM, if the output \textit{"Q validated an exchange request from P with public key $\pk$"} happens, in particular it means that $Q$ has trustfully validated $P$ as the party at the other end of the communication. This means that we can establish $P$ as the sender, and $Q$ as the final receiver. It remains to see that, except with negligible probability, it cannot happen that $P$ intended to contact another party $Q^* \neq Q$ or send another public key $\pk^* \neq \pk$.
    
Therefore, the probability of this event happening on any particular $(P, Q, \pk)$ is reduced to the probability of an UM adversary to generate different elements, or a different receiver, to those intended in a way that the later verification phase $I_f$ between $P$ and $Q$ is successful. We model that in the game G$_{AV}$ and the advantage against the scheme as
\begin{align*}
    \advantage{}{\texttt{CS-KA-3}} := \abs{\prob{\texttt{KA$_{AV}$} \Rightarrow 1}}
\end{align*}
\begin{pchstack}[ boxed , center, space=1em]
  {\procedure[linenumbering]{Game KA$_{AV}$}{\phantomsection\label{KA_CS}
      Q \sample \mathcal{P} \\
      (\sk_P, \pk_P) \sample \texttt{KeyGen()} \\
      (c, d) \sample \texttt{Com($\pk_P$)}  \\
      (Q^*, c^*, d^*) \sample \adv_1^{O}(c, Q) \\
      (\sk_Q, \pk_Q) \sample \texttt{KeyGen()} \\
      \pk_Q^* \sample \adv_2^{O}(c, c^*, d^*, Q, Q^*, \pk_Q) \\
      v_1 = O(Q, c, \pk_Q^*, \pk_P) \\
      d^{**} \sample \adv_3^{O}(c, c^*, d^*, Q, Q^*, \pk_Q, \pk_Q^*, d) \\
      \pk_P^{**} = Open(c^*, d^{**}) \\
      v_2 = O(Q^*, c^*, \pk_Q, \pk_P^{**}) \\
      \pcreturn (v_1 = v_2) \wedge \bigwedge (v_i \neq \bot) \wedge (c^*, Q^*, \pk_Q^*, d^{**}) \neq (c, Q, \pk_Q, d)
    }}
\end{pchstack}
where $\mathcal{P}$ represents the space of valid \textit{parties}, and $O(P, x_1, x_2, x_3)$ represents the AV calculation that will undergo each party and $P$ is the identity of the intended receiver.

It is important to note that, due to Definition \ref{AV_G}, if $(x_1, \cdots, x_n) \neq (x_1^*, \cdots, x_n^*)$, then $TLV(x_1) \conc \cdots \conc TLV(x_n) \neq TLV(x_1^*) \conc \cdots \conc TLV(x_n^*)$. Therefore, for two distinct inputs to generate an output match, a random collision at the oracle $O$ must occur.

We define the intermediate games $KA_0$ and $KA_1$, where intuitively each game represents an additional layer where the attacker behaves like the AM model: in game $G_0$, the attacker does not modify the flow of the last exchange and in game $G_1$ the attacker only modifies the first exchange.

\begin{pchstack}[ boxed , center, space=1em]
  {\procedure[linenumbering]{Game KA$_0$}{\phantomsection\label{G00_KA}
      Q \sample \mathcal{P} \\
      (\sk_P, \pk_P) \sample \texttt{KeyGen()} \\
      (c, d) \sample \texttt{Com($\pk_P$)}  \\
      (Q^*, c^*, d^*) \sample \adv_1^{O}(c, Q) \\
      (\sk_Q, \pk_Q) \sample \texttt{KeyGen()} \\
      \pk_Q^* \sample \adv_2^{O}(c, c^*, d^*, Q, Q^*, \pk_Q) \\
      v_1 = O(Q, c, \pk_Q^*, \pk_P) \\
      \pk_P^{**} = Open(c^*, d) \\
      v_2 = O(Q^*, c^*, \pk_Q, \pk_P^{**}) \\
      b_1 = (v_1 == v_2) \\
      b_2 = \bigwedge (v_i \neq \bot) \\
      b_3 = (c^*, Q^*, \pk_Q^*) \neq (c, Q, \pk_Q) \\
      \pcreturn b_1 \wedge b_2 \wedge b_3
    }}
    
  {\procedure[linenumbering]{Game KA$_1$}{\phantomsection\label{G10_KA}
      Q \sample \mathcal{P} \\
      (\sk_P, \pk_P) \sample \texttt{KeyGen()} \\
      (c, d) \sample \texttt{Com($\pk_P$)}  \\
      (Q^*, c^*, d^*) \sample \adv_1^{O}(c, Q) \\
      (\sk_Q, \pk_Q) \sample \texttt{KeyGen()} \\
      v_1 = O(Q, c, \pk_Q, \pk_P) \\
      \pk_P^{**} = Open(c^*, d) \\
      v_2 = O(Q^*, c^*, \pk_Q, \pk_P^{**}) \\
      b_1 = (v_1 == v_2) \\
      b_2 = \bigwedge (v_i \neq \bot) \\
      b_3 = (c^*, Q^*) \neq (c, Q) \\
      \pcreturn b_1 \wedge b_2 \wedge b_3
    }}
\end{pchstack}

Therefore, the value $\abs{\prob{\texttt{KA$_{AV}$} \Rightarrow 1} - \prob{\texttt{KA$_0$} \Rightarrow 1}}$ represents the advantage of the attacker in generating a collision between the oracle output from both parties when being able to modify the opening value received by the responder. \vspace{1mm}

\textbf{Step 1) Analysis of $\abs{\prob{\texttt{KA$_{AV}$} \Rightarrow 1} - \prob{\texttt{KA$_0$} \Rightarrow 1}}$:} In order to bound this difference, let us define $\mathcal{F}$ as the event that, in Game KA$_{AV}$, an attacker has the ability to exploit the additional step $\adv_3$ to win the game. Formally, this means the event the attacker is able to generate a value $d^{**}$ in step $\adv_3$ such that $Open(c^*, d^{**}) = \pk_P^{**}$, satisfying $v_1 = v_2 \wedge (c^*, Q^*, \pk_Q, d^{**}) \neq (c, Q, \pk_Q^*, d)$. 

Note that, by definition of the games under consideration, since the event $\mathcal{F}$ is the only difference between the games, the application of the difference lemma \cite{cryptoeprint:2004/332} yields
$$\abs{\prob{\texttt{KA$_{AV}$} \Rightarrow 1} - \prob{\texttt{KA$_0$} \Rightarrow 1}} \leq \prob{\mathcal{F} \Rightarrow 1}$$ 

Now, it is required to bound the probability of the event $\mathcal{F}$. To do so, two events are considered: \texttt{F$_1$}, in which $(c^*, Q^*, \pk_Q^*) \neq (c, Q, \pk_Q)$, and \texttt{F$_2$}, in which $(c^*, Q^*, \pk_Q^*) = (c, Q, \pk_Q)$. \vspace{1mm}

\textbf{Event \texttt{F$_1$}:} Under \texttt{F$_1$}, $(c^*, Q^*, \pk_Q, \pk_P^{**}) \neq (c, Q, \pk_Q^*, \pk_P)$ and therefore the only possible way to achieve $v_1 = v_2$ is via an oracle collision. Therefore, an adversary needs to find $d^{**}$ such that $Open(c^*, d^{**}) = \pk_P^{**}$ and $O(Q^*, c^*, \pk_Q, \pk_P^{**}) = O(Q, c, \pk_Q^*, \pk_P)$.

For this analysis, two sub-cases are defined: given $\pk_P^* = Open(c^*, d^*)$, define \texttt{F$_{1, 1}$} as the event that $O(Q^*, c^*, \pk_Q, \pk_P^*) = O(Q, c, \pk_Q^*, \pk_P)$, and \texttt{F$_{1, 2}$} as the event that $O(Q^*, c^*, \pk_Q, \pk_P^*) \neq O(Q, c, \pk_Q^*, \pk_P)$. \vspace{1mm}

\textbf{Sub-event \texttt{F$_{1, 1}$}:} If event \texttt{F$_{1, 1}$} happens, the adversary has already won the game, as a random collision has happened. Therefore, $\prob{\texttt{F} \cond \texttt{F$_{1, 1}$}, \texttt{F$_{1}$}} = 1$ and $\prob{\texttt{F$_{1, 1}$} \cond \texttt{F$_{1}$}} = \frac{1}{2^{n_{av}}}$. \vspace{1mm}

\textbf{Sub-event \texttt{F$_{1, 2}$}:} If \texttt{F$_{1, 2}$} holds, the adversary is required to find $d^{**}$ such that $Open(c^*, d^{**}) = \pk_P^{**} \neq \pk_P^*$ satisfying $O(Q^*, c^*,  \pk_Q, \pk_P^{**}) = O(Q, c, \pk_Q^*, \pk_P)$. Note that the requirement $\pk_P^{**} \neq \pk_P^*$ is precisely because $\texttt{F$_{1, 2}$}$ ensures that, if $\pk_P^{**} = \pk_P^*$, then $O(Q^*, c^*,  \pk_Q, \pk_P^{**}) = O(Q^*, c^*, \pk_Q, \pk_P^*) \neq O(Q, c, \pk_Q^*, \pk_P)$. 

We claim that the probability $\prob{\texttt{F} \cond \texttt{F$_{1, 2}$}, \texttt{F$_{1}$}}$ is bounded by the \textit{Combined Collision} advantage: first note that the probability of an adversary to find a $d^{**}$ such that $Open(c^*, d^{**}) = \pk_P^{**} \neq \pk_P^*$ is modeled exactly by the following: 
\begin{pchstack}[boxed, center]
  {\procedure[linenumbering]{Game G$_{Binding^*}$}{\phantomsection\label{Gbinding'}
      (c, d) \sample \adv_0() \\
      d' \sample \adv (c, d)  \\
      m = Open(c, d)  \\
      m' = Open(c, d')  \\
      \pcreturn (m \neq m') \wedge (m, m' \neq \bot)
    }}
\end{pchstack}
as the pair $(c, d)$ was also generated by the adversary at a previous stage. Note that
$$\prob{\texttt{G$_{Binding^*}$} \Rightarrow 1} \leq \prob{\texttt{G$_{Binding}$} \Rightarrow 1}$$ 
(i.e. the game $G_{Binding^*}$ is bounded by the game $G_{Binding}$), as in the later, the adversary is the one to draw the tuple $(c, d, d')$ at the same time, as opposed to receiving $(c, d)$ and just generating $d'$. Note that, in this case, the games would actually be equivalent, as the two generation phases correspond to the adversary and not additional information is needed between phases.

Now, we define an adversary $\mathcal{Y}_1$ that returns $\pk_P^{**}$ if it can find $d^{**}$ such that $Open(c^*, d^{**}) = \pk_P^{**} \neq \pk_P^*$, which by the above analysis, has probability of success bounded by $\advantage{\texttt{Binding}}{\texttt{CS}}$. Then, $\prob{\texttt{F} \cond \texttt{F$_{1, 2}$}, \texttt{F$_{1}$}}$ is exactly $\advantage{\texttt{combined}}{\texttt{CHF}}[(\mathcal{A}, n_{av}, \mathcal{Y}_1)]$ and thus by \ref{combined_adv} $$\prob{\texttt{F} \cond \texttt{F$_{1, 2}$}, \texttt{F$_{1}$}} \leq q_O \frac{\advantage{\texttt{Binding}}{\texttt{CS}}}{2^{n_{av}}}$$

\textbf{Event \texttt{F$_2$}:} Under \texttt{F$_2$}, $(c^*, Q^*, \pk_Q^*) = (c, Q, \pk_Q)$ and therefore there are only two possible ways to win the game \texttt{F}. The first one, referred to as \texttt{F$_{2, 1}$}, is to generate a collision on the inputs to the oracle function. This means, to find $d^{**} \neq d$ such that $Open(c^*, d^{**}) = Open(c, d^{**}) := \pk_P^{**} = \pk_P$. Note that the requirement $d^{**} \neq d$ is to ensure that a collision on the modified elements does not occur. \vspace{1mm} 

\textbf{Sub-event \texttt{F$_{2, 1}$}:} The probability of such a collision is modeled exactly by the following game
\begin{pchstack}[boxed, center]
  {\procedure[linenumbering]{Game G$_{Strong-Binding^*}$}{\phantomsection\label{GStr_binding'}
  	 m \sample \mathcal{M} \\
      (c, d) \sample \texttt{Com}(m) \\
      d^{**} \sample \adv (c, d)  \\
      m^{**} = Open(c, d^{**})  \\
      \pcreturn (m = m^{**}) \wedge (d^{**} \neq d) \wedge (m \neq \bot)
    }}
\end{pchstack}
and $$\prob{\texttt{G$_{Strong-Binding^*}$} \Rightarrow 1} \leq \prob{\texttt{G$_{Strong-Binding}$} \Rightarrow 1}$$ 
(i.e. the game $G_{Strong-Binding^*}$ is trivially harder that the game $G_{Strong-Binding}$), as in the later, the adversary is the one to draw the tuple $(c, d, d')$, as opposed to receiving $(c, d)$ and just generating $d'$.

Therefore, $$\prob{\texttt{F$_{2, 1}$} \cond \texttt{F$_2$}} = \prob{\texttt{G$_{Strong-Binding^*}$} \Rightarrow 1} \leq \advantage{\texttt{Strong-Binding}}{\texttt{CS}}$$

The second one, referred to as \texttt{F$_{2, 2}$}, is to generate a collision on the outputs, provided that the inputs are not equal. This means, to find $d^{**} \neq d$ such that $Open(c^*, d^{**}) = \pk_P^{**} \neq \pk_P$ and $O(Q^*, c^*,  \pk_Q, \pk_P^{**}) = O(Q, c, \pk_Q^*, \pk_P)$. \vspace{1mm}

\textbf{Sub-event \texttt{F$_{2, 2}$}:} In a similar manner to above, we claim that the probability $\prob{ \texttt{F$_{2, 2}$} \cond \texttt{F$_{2}$}}$ is bounded by the \textit{Combined Collision} advantage: first note that the probability of an adversary to find a $d^{**} \neq d$ such that $Open(c, d^{**}) = \pk_P^{**} \neq \pk_P$ is modeled exactly by the Game G$_{Binding^*}$.

Now, we define an adversary $\mathcal{Y}_2$ that returns $\pk_P^{**}$ if it can find $d^{**} \neq d$ with the above conditions. Then, $\prob{\texttt{F$_{2, 2}$} \cond \texttt{F$_{2}$}}$ is exactly $\advantage{\texttt{combined}}{\texttt{CHF}}[(\mathcal{A}, n_{av}, \mathcal{Y}_2)]$ and thus by \ref{combined_adv} 
$$\prob{ \texttt{F$_{2, 2}$} \cond \texttt{F$_{2}$}} \leq \prob{\texttt{G$_{Binding^*}$} \Rightarrow 1} \cdot \frac{q_O}{2^{n_{av}}} \leq \advantage{\texttt{Binding}}{\texttt{CS}} \cdot \frac{q_O}{2^{n_{av}}}$$

Together, we have 
\begin{align*}
& \prob{\texttt{F} \Rightarrow 1} = \prob{\texttt{F} \Rightarrow 1 \cond \texttt{F$_1$}} \cdot \prob{\texttt{F$_1$}} + \prob{\texttt{F} \Rightarrow 1 \cond \texttt{F$_2$}} \cdot \prob{\texttt{F$_2$}} \leq \\
& \prob{\texttt{F} \Rightarrow 1 \cond \texttt{F$_{1, 1}$}, \texttt{F$_1$}} \cdot \prob{\texttt{F$_{1, 1}$} \cond \texttt{F$_1$}} + \prob{\texttt{F} \Rightarrow 1 \cond \texttt{F$_{1, 2}$}, \texttt{F$_1$}} \cdot \prob{\texttt{F$_{1, 2}$} \cond \texttt{F$_1$}} + \prob{\texttt{F} \Rightarrow 1 \cond \texttt{F$_2$}} \leq \\
& \frac{1}{2^{n_{av}}}\cdot (1 + q_O \cdot \advantage{\texttt{Binding}}{\texttt{CS}}) + \prob{\texttt{F$_{2, 1}$} \Rightarrow 1 \cond \texttt{F$_2$}} + \prob{\texttt{F$_{2, 2}$} \Rightarrow 1 \cond \texttt{F$_2$}} \leq \\
& \frac{1}{2^{n_{av}}}\cdot (1 + 2 \cdot q_O \cdot \advantage{\texttt{Binding}}{\texttt{CS}}) +\advantage{\texttt{Strong-Binding}}{\texttt{CS}}
\end{align*}

The value $\abs{\prob{\texttt{KA$_0$} \Rightarrow 1} - \prob{\texttt{KA$_1$} \Rightarrow 1}}$ represents the advantage of the attacker in generating a collision between the oracle output from both parties when being able to modify the random challenge value sent by the responder. \vspace{1mm}

\textbf{Step 2) Analysis of $\abs{\prob{\texttt{KA$_{0}$} \Rightarrow 1} - \prob{\texttt{KA$_1$} \Rightarrow 1}}$:} In order to bound this difference, let us define $\mathcal{H}$ as the event that in game \texttt{G$_0$} an adversary has the ability to exploit the second exchange to win the game, i.e. to generate $\pk_Q^*$ in step $\adv_2$ such that $v_1 = v_2$ and $(c^*, Q^*, \pk_Q^*) \neq (c, Q, \pk_Q)$. Note that, by definition of the games under consideration, the application of the difference lemma \cite{cryptoeprint:2004/332} yields
$$\abs{\prob{\texttt{KA$_{0}$} \Rightarrow 1} - \prob{\texttt{KA$_1$} \Rightarrow 1}} \leq \prob{\mathcal{H} \Rightarrow 1}$$
Now, it is required to bound the probability of the above event. There are exactly two ways: via a random generation of $\pk_Q^*$ and collision between outputs (referred to as \texttt{H$_1$}), and by trying to extract information and generating a targeted value for $\pk_Q^*$ (i.e. not trying random collision), referred to as \texttt{H$_2$}. \vspace{1mm}

\textbf{Event \texttt{H$_1$}:} The advantage in winning the game from the first option is bounded by the probability of a random collision, which is $\frac{1}{2^{n_{av}}}$. Therefore, we will focus on the probabilities for an adversary in making an elaborated guess. \vspace{1mm}

\textbf{Event \texttt{H$_2$}:} The analysis is done in terms of the event that the adversary knows the value of $\pk_P$, referred to as $D$. First, we will analyze the probability of $D$ itself. Since $\pk_P = Open(c, d)$, the pair $(c, d)$ represents a valid opening of $\pk_P$. Therefore, this probability is bounded by the advantage in retrieving the committed input from a valid commitment. This is modeled by the following game:
\begin{pchstack}[boxed, center]
  {\procedure[linenumbering]{Game G$_{Hiding^*}$}{\phantomsection\label{G_Hiding'}
      m \sample \mathcal{M} \\
      (c, d) \sample Com(m) \\
      m' \sample \adv(c) \\
      \pcreturn (m = m')
    }}
\end{pchstack}

And, in turn, this advantage is bounded by the ability to distinguish between two committed inputs from a commitment value, in a traditional $IND \implies OW$ argument. The latter represents the Hiding advantage of a commitment scheme, and therefore it holds that $\prob{D} \leq \advantage{\texttt{Hiding}}{\texttt{CS}}$. \vspace{1mm}

\textbf{Sub-event \texttt{$\neg D$}:} In case $D$ does not hold, the adversary does not possess all the information required to make an informed guess: the value $v_1$ is dependent on both the choice of $\pk_Q^*$ of the adversary and $\pk_P$, which is defined but not known to the adversary. Therefore, $\prob{\texttt{H$_2$} \cond \neg D} = 0$, as the only possible way when not all information is known is random collision. \vspace{1mm}


\textbf{Sub-event \texttt{$D$}:} In case $D$ does hold, we analyze the success probability in terms of the event $C$ that $c^* = c$. If $C$ is also true, then $Open(c^*, d) = \pk_P^{**} = \pk_P = Open(c, d)$. Therefore, since $\pk_P$ is known, $\pk_P^{**}$ is also known. This provides the adversary with the knowledge of all elements involved in the generation of both $v_1$ and $v_2$. 

Thus the success probability $\prob{\texttt{H$_2$} \cond C,D}$ resides in the adversaries ability to generate a random element $\pk_Q^* \neq \pk_Q$ such that $O(Q^*,c, \pk_Q, \pk_P) = O(Q, c, \pk_Q^*, \pk_P)$. Since the attacker cannot win the game via input collision, as it must hold that $(Q^*, c^*, \pk_Q^*) \neq (Q, c, \pk_Q)$, the only viable option is to search for a preimage-collision over public keys $\pk_Q^*$, which has probability at most $\frac{q_O}{2^{n_{av}}}$, with $q_O$ being the maximum number of queries made to the oracle $O$.

In case $D$ holds but $C$ does not, further information cannot be claimed about $\pk_P^{**}$. For any adversary to be able to win the game without random collision, three conditions must happen simultaneously: first, $\pk_P^{**} \neq \bot$. Second, to be able to retrieve $\pk_P^{**}$ from $c^*$. Lastly, the ability to force a collision via generating $\pk_Q^*$.

For the analysis of first condition, since $\pk_P^{**} = Open(c^*, d)$ and $c^* \neq c$, it is required to bound the probability that $\pk_P^{**} \neq \bot$. It can be bounded by the advantage of the CR property of commitment schemes. Note, however, that this bound is very conservative, since the CR-CS game allows the adversary to find $(c, c', d)$, while in here they are all established.

The second condition requires to bound the probability of retrieving its value. Since $\pk_P^{**} = Open(c^*, d) \neq \bot$, the pair $(c^*, d)$ form a valid commitment pair of the value $\pk_P^{**}$. Therefore, this probability is bounded by the advantage in retrieving the committed input from a valid commitment. As above, this can be bounded by the Hiding property of the CS.

The third condition requires to bound the probability of being able search for a preimage-collision over public keys $\pk_Q^*$. It has probability at most $\frac{q_O}{2^{n_{av}}}$, with $q_O$ being the maximum number of queries made to the oracle $O$.

Together, this means that $$\prob{\texttt{H$_2$} \cond \neg C,D} \leq \advantage{\texttt{Hiding}}{\texttt{CS}} \cdot \advantage{\texttt{CR}}{\texttt{CS}} \cdot \frac{q_O}{2^{n_{av}}}$$
and therefore
\begin{align*}
& \prob{\texttt{H} \Rightarrow 1} \leq \prob{\texttt{H$_1$}} + \prob{\texttt{H$_2$}} \leq \\
& \frac{1}{2^{n_{av}}} + \prob{\texttt{H$_2$} \cond \neg D} \cdot \prob{\neg D} + \prob{\texttt{H$_2$} \cond D} \cdot \prob{D} = \\
& \frac{1}{2^{n_{av}}} + (\prob{\texttt{H$_2$} \cond C, D} \cdot \prob{C \cond D} + \prob{\texttt{H$_2$} \cond \neg C, D} \cdot \prob{\neg C \cond D}) \cdot \prob{D} \leq \\
& \frac{1}{2^{n_{av}}} + (1 + \advantage{\texttt{Hiding}}{\texttt{CS}} \cdot \advantage{\texttt{CR}}{\texttt{CS}}) \cdot \frac{q_O}{2^{n_{av}}} \cdot \advantage{\texttt{Hiding}}{\texttt{CS}}
\end{align*}

\textbf{Step 3) Analysis of $\prob{\texttt{KA$_1$} \Rightarrow 1}$:} Lastly, the advantage of the game $G_{1}$ is exactly the probability of a random collision between inputs when generating a different commitment value $c^*$, or when specifying a different receiver $Q^*$ than the one $Q$ originally intended.

It is straightforward to see that, regardless of the information that can be extracted from $c$ or generated from it, a random collision on the oracle is still required, since $\pk_Q$ has not yet been generated and $(c^*, Q^*) \neq (c, Q)$. The probability of this collision is exactly $\frac{1}{2^{n_{av}}}$.

Together, we have the following:
\begin{align*}
    & \abs{\prob{\texttt{KA$_{AV}$} \Rightarrow 1}} \leq \\ 
    & \abs{\prob{\texttt{KA$_{AV}$} \Rightarrow 1} - \prob{\texttt{KA$_0$} \Rightarrow 1}} + \abs{\prob{\texttt{KA$_{0}$} \Rightarrow 1} - \prob{\texttt{KA$_1$} \Rightarrow 1}} + \abs{\prob{\texttt{KA$_1$} \Rightarrow 1}} \leq \\
    & \frac{3}{2^{n_{av}}} + \frac{q_O}{2^{n_{av}}} \cdot ((1 + \advantage{\texttt{Hiding}}{\texttt{CS}} \cdot \advantage{\texttt{CR}}{\texttt{CS}}) \cdot \advantage{\texttt{Hiding}}{\texttt{CS}} \\
    & + 2 \cdot \advantage{\texttt{Binding}}{\texttt{CS}} +\advantage{\texttt{Strong-Binding}}{\texttt{CS}})
\end{align*}
    Then, considering all possible messages between parties, which equals the maximum number of messages sent by each party times the square of the number of parties, we arrive to the desired bound.
\end{proof}

It is important to note that this protocol only provides one-way authentication, as only one AV calculation is employed.

Now, the SK security of the 3-pass KA-based protocol depends upon the practical KA scheme employed:
\begin{proposition}\label{gen_3_pass_KA_sk_security}
    Let the 2-pass KA-based protocol be a $\epsilon$-SK-secure protocol in the AM. Then, the 3-pass KA-based protocol is $(\epsilon + 2\cdot \alpha)$-SK secure in the UM,
    where $\alpha$ is the emulation bound proven in Proposition \ref{3_pass_2_pass_emulation_bound}.
\end{proposition}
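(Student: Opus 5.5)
The plan is to derive the statement directly from Theorem \ref{ThEmulSK}, with Proposition \ref{3_pass_2_pass_emulation_bound} supplying the emulation degree. The one thing to explain is why the emulation term appears as $2\cdot\alpha$ rather than $\alpha$. Once that is settled, the proof is a one-line substitution.

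First, I fix the AM protocol $\pi$ to be the generic 2-pass KA-based protocol. By hypothesis it is $\epsilon$-SK-secure in the AM. I take $\pi'$ to be the 3-pass commitment-based KA protocol, instantiated with a Robust Commitment Scheme and the AV random oracle $O$. By Proposition \ref{3_pass_2_pass_emulation_bound}, $\pi'$ $\alpha$-emulates $\pi$ over unauthenticated channels, where $\alpha$ is the displayed bound with $l = n_p^2 n_e$.

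Second, I account for the factor $2$. The bound in Proposition \ref{3_pass_2_pass_emulation_bound} controls only the bad event attached to the initiator's flow. That is the event that $Q$ validates a KE request from $P$ with key $\pk$ that $P$ never issued, and it is what $I_f$ detects through the single AV $E_A$. The AM adversary must also simulate the responder's message $\pk_B$. In the AM this message is delivered faithfully, but in the UM it is not separately authenticated. Instead, it is bound into the same AV through the term $\pk_Q^*$, which is exactly what Step 2 of the previous proof analyses. So I will argue that the simulation fails only if one of two events occurs:
\begin{itemize}
\item the forged-initiator event, whose probability is bounded by $\alpha$;
\item the symmetric event that the initiator completes with a responder key Bob never sent, while the AV still verifies.
\end{itemize}
For the second event I reuse the game KA$_{AV}$ with the roles of the two flows exchanged and the same per-exchange union bound over $l$ sessions. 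This gives a bound no larger than $\alpha$. A union bound over the two events then yields an emulation degree of at most $2\alpha$.

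Third, I apply Theorem \ref{ThEmulSK} with emulation degree $2\alpha$. This gives $\epsilon' = \epsilon + 2\alpha$, which is the claim.

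The main obstacle is the second step: justifying that the responder-side forgery event is also bounded by $\alpha$. The difficulty is that the protocol gives only one-way authentication, so Bob's key is not authenticated independently. My first attempt will be to show that any UM run in which Alice accepts a substituted $\pk_Q^*$ yet $I_f$ succeeds is already a winning run of KA$_0$ or KA$_1$ of the previous proof. If that reduction is only partial, I will fall back on bounding the second event crudely by a second copy of the same hybrid sequence. That still suffices, because the statement only asks for the conservative constant $2$.
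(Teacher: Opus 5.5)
Your conclusion is correct, but the paper obtains it in one line, and your second step is unnecessary. The paper's proof is exactly the substitution you begin with. Proposition \ref{3_pass_2_pass_emulation_bound} asserts that the 3-pass protocol $\alpha$-emulates the 2-pass AM protocol, so Theorem \ref{ThEmulSK} gives $(\epsilon+\alpha)$-SK security in the UM. An $\epsilon'$-SK-security claim bounds the adversary's advantage from above, so this already implies the stated $(\epsilon+2\alpha)$ bound. The factor $2$ is slack relative to Theorem \ref{ThEmulSK} as stated, and it needs no justification.

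The argument you give to earn the factor $2$ also does not hold up as written.

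\begin{enumerate}
\item It conflicts with your own first step. There you use Proposition \ref{3_pass_2_pass_emulation_bound} as an $\alpha$-emulation of the whole protocol, but afterwards you treat $\alpha$ as covering only the initiator's flow.
\item Your ``responder-side'' event is already inside $\alpha$. The prose of that proposition's proof phrases the bad event via the initiator's request, but the game KA$_{AV}$ it bounds lets the adversary substitute $\pk_Q^*$. Its win condition $(c^*,Q^*,\pk_Q^*,d^{**})\neq(c,Q,\pk_Q,d)$ covers Alice's AV $v_1=O(Q,c,\pk_Q^*,\pk_P)$ matching Bob's $v_2$ with $\pk_Q^*\neq\pk_Q$. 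Your union bound therefore double-counts this event.
\item The mechanisms you propose for bounding it separately fail. The protocol is not symmetric: Bob commits to nothing, so a role-swapped KA$_{AV}$ has no commitment whose hiding, binding or strong binding could be invoked. Your first attempt also does not go through. A run with substituted $\pk_Q^*$ is never a winning run of KA$_1$, which has no $\adv_2$ step. It is a run of KA$_0$ only if $d$ is forwarded unchanged.
\end{enumerate}

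The only sound reduction for that event is to KA$_{AV}$ itself. That is why your fallback of reusing the same hybrid sequence happens to give a valid, if loose, bound.
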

\begin{proof}
    The result is a direct application of \ref{ThEmulSK} and \ref{3_pass_2_pass_emulation_bound}.
\end{proof}

\section{An alternative 3-pass KEM-based protocol}
This section will present an alternative KEM-based protocol constructed from 3-passes, in an attempt to improve the optimized 4-pass KEM-based protocol presented in \cite{RMSL:2026:AKE_COM}.

\subsection{A secure 2-pass KEM-based protocol in the AM}
In the case of KEM-based protocols based on the CS-based AKE model introduced in \cite{RMSL:2026:AKE_COM}, a generic 2-pass protocol would look like this:

\begin{pchstack}[boxed, center, space=1em]
    \procedure{Generic 2-pass KEM-based protocol}{%
     \textbf{Alice} \> \> \textbf{Bob} \\
     (\sk, \pk) \sample \texttt{KGen} (\secparam) \> \> \\
     \> \sendmessageright{top=\text{($\pk$, s)}} \> \\
     \> \> (K_b, ct) \sample \texttt{Encaps($\pk$)} (\secparam) \\ 
     \> \sendmessageleft{top=\text{(ct, s)}} \> \\ 
     K_{a} = \texttt{Decaps}(ct, \sk) \> \> }
\end{pchstack}

The following proposition gives a bound on the security of a 2-pass construction from a generic KEM scheme, in the AM:
\begin{proposition}\cite{BdKM:2023:MDK}\label{sk_security_2_pass_kem}
    Let $\adv$ be an adversary against the SK-security of the above protocol, which interacts with at most $q$ sessions for each pair of $n_P$ parties. Then, the 2-pass KEM-based protocol is $\epsilon$-SK-secure in the AM, with 
    $$ \epsilon \leq q\cdot n_P^2 \cdot \advantage{CPA}{KEM}$$
\end{proposition}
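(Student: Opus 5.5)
The plan is a standard guessing reduction from the SK-security of the 2-pass KEM-based protocol in the AM to the IND-CPA security of the KEM, as in the style of \cite{CK:2001:AKE} and \cite{BdKM:2023:MDK}. SK-security has two parts. The first is that two uncorrupted matching sessions output the same key. The second is that the adversary $\adv$ cannot distinguish the key of the test session from a random key.

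The first part follows directly from the correctness of the KEM. In the AM, messages are delivered faithfully, so the responder encapsulates under exactly the $\pk$ that the initiator generated. The initiator in turn decapsulates exactly the $ct$ produced by the responder, so $\texttt{Decaps}(ct,\sk)=K_b$. This part therefore contributes nothing to $\epsilon$, and the whole bound comes from key indistinguishability.

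For key indistinguishability, I will build an IND-CPA adversary $\mathcal{B}$ that receives $(\pk, ct^*, K_b^*)$ from the IND-CPA game. $\mathcal{B}$ runs $\adv$ in the AM and simulates all parties honestly, except that it first guesses uniformly which session will be chosen as the test session. That session is indexed by an ordered pair of parties among the $n_P^2$ pairs and one of at most $q$ sessions per pair, so $\mathcal{B}$ guesses it with probability at least $1/(q\, n_P^2)$.

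In the guessed session, $\mathcal{B}$ embeds the challenge as follows:
\begin{itemize}
\item the initiator's first message carries the challenge $\pk$;
\item the responder's reply carries $ct^*$;
\item both the responder's key and the matching initiator's key are set to $K_b^*$.
\end{itemize}
Because the AM delivers these messages unchanged, this embedding is consistent with the protocol. $\mathcal{B}$ never needs the secret key or a decapsulation oracle, and this is exactly why CPA rather than CCA suffices. All other sessions use fresh, independently generated keys, so $\mathcal{B}$ can answer session-key reveals, state reveals and party corruptions for them perfectly. The test session and its matching session cannot be exposed by the freshness rules. If $\mathcal{B}$'s guess turns out wrong, for instance because $\adv$ tests another session or exposes the guessed one, $\mathcal{B}$ aborts and outputs a random bit.

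When the guess is correct, the simulation is perfect. The test query is answered with $K_b^*$, which is the real key if $b=0$ and a uniform key if $b=1$, so $\mathcal{B}$ outputs $\adv$'s guess. Conditioning on the guess event, which is independent of $\adv$'s view, gives $\advantage{CPA}{KEM}[(\mathcal{B})] \ge \epsilon/(q\, n_P^2)$. Both advantages use the same $2|\Pr[\cdot]-\tfrac12|$ normalisation, so this yields the claimed bound $\epsilon \le q\, n_P^2 \cdot \advantage{CPA}{KEM}$.

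The main obstacle will be the bookkeeping that justifies perfect simulation. I must check three things: that the guess is independent of $\adv$'s view; that every exposure query permitted for a fresh test session can be answered without $\sk$ (in particular, the test initiator's ephemeral state may not be revealed); and that the matching session's key is consistent with the embedding.
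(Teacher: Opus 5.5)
Your proposal is correct and follows the standard argument behind this cited bound. You guess the tested exchange among the at most $q\, n_P^2$ slots, embed the IND-CPA challenge $(\pk, ct^*, K_b^*)$ into both of its sessions (authentic delivery in the AM means no decapsulation is ever needed, so CPA suffices), and lose exactly the guessing factor, as in the DH analogue of \cite{CK:2001:AKE}. The paper itself gives no proof and imports the result from \cite{BdKM:2023:MDK}, whose $q\, n_P^2$ factor is precisely this guessing loss.
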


However, it was shown in \cite{RMSL:2026:AKE_COM} that the security of a 2-pass KEM-based protocol has a lower bound not dependent upon the:

\begin{proposition}\cite{RMSL:2026:AKE_COM}
The 2-pass KA-based protocol has SK-security in the UM model, regardless of the KA scheme employed, bounded by the following value:
\begin{align*}
    \advantage{\texttt{Key-Ind}}{\pi_{UM}}[(\adv)] \geq 1 - \left(1 - \frac{1}{2^l}\right)^{\min\{q, |Y|\}}
\end{align*}
where $l$ is the output length of the AV random oracle, $q$ is the maximum number of queries the adversary is able to make against the AV random oracle and $Y: = \{(\pk, K) : (\sk, \pk) \sample \texttt{KGen}(), K \sample \texttt{KA}(\pk_p, \sk)\}$ is the domain of all possible combinations of valid public keys and shared secrets with public key $\pk_p$ of the KA scheme.
\end{proposition}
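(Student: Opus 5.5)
The plan is to prove the lower bound constructively. I will exhibit a concrete UM adversary against the 2-pass protocol that mounts an \emph{AV-collision attack}. I will then show that its key-indistinguishability advantage is at least the probability that one of at most $\min\{q,|Y|\}$ independent random-oracle trials produces a collision. Since the 2-pass protocol carries no commitment or other binding element, the AV on each side is a function only of values that the adversary can choose or learn before the finalization $I_f$. Concretely, it is $G(Q,\pk_p,\cdot)$ evaluated over the public elements and, as in the definition of $Y$, the derived shared value.

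\textbf{Step 1: the attack.} The adversary $\adv$ intercepts the first message $(\pk_p,s)$ sent by the uncorrupted party $P$ to $Q$. It then proceeds as follows.
\begin{enumerate}
\item It honestly computes the AV value $v_P$ that $P$ will obtain. This value depends only on public data and on the response that $\adv$ itself will forward to $P$; $\adv$ can make $v_P$ computable by relaying to $P$ a response it generated itself, e.g.\ its own key pair or its own encapsulation under $\pk_p$.
\item It enumerates candidates $y_i=(\pk_i,K_i)\in Y$, generated by running $\texttt{KGen}$ and the key derivation itself, so that for each candidate it knows the corresponding secret key and shared secret.
\item For each candidate it queries the AV oracle on the input that $Q$ would hash if $Q$ received $\pk_i$ in place of $\pk_p$.
\end{enumerate}
It stops at the first $i$ with $G(\cdot,y_i)=v_P$, or after $\min\{q,|Y|\}$ distinct queries. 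If it finds such an $i$, it delivers $\pk_i$ to $Q$. The AVs of $P$ and $Q$ then coincide, so $I_f$ succeeds even though $Q$'s session is not matched with $P$'s intended key, and $\adv$ knows $Q$'s session key because it holds the secret for $\pk_i$. Finally, $\adv$ selects $Q$'s session as the test session.

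\textbf{Step 2: probability of a collision.} Under the ROM and the TLV encoding of Definition~\ref{AV_G}, each of the distinct queries $y_i$ differs from $P$'s input, so its output is a fresh uniform string of length $l$, independent of $v_P$. Each query therefore hits $v_P$ with probability exactly $2^{-l}$, independently of the other queries. The probability that at least one of the $\min\{q,|Y|\}$ trials collides is thus $1-(1-2^{-l})^{\min\{q,|Y|\}}$.

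\textbf{Step 3: converting a collision into key-indistinguishability advantage.} On a collision, $\adv$ computes the real key and compares it with the test response, so it wins with probability $1$. Otherwise it outputs a random bit and wins with probability exactly $1/2$. With $p$ the collision probability, the winning probability is $p+(1-p)/2$, so the normalized advantage $2\,|\Pr[\text{win}]-1/2|$ equals $p$. This gives exactly the stated bound.

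The main obstacle is the subtle point in Step 1: making sure that $v_P$ is fixed and known to $\adv$ before its search, so that the trials really are independent fresh oracle outputs rather than guesses against an unknown target. It must also be checked that the test session remains fresh, i.e.\ not exposed, in the model of \cite{RMSL:2026:AKE_COM} despite $\adv$ being the effective peer. I would handle the first point by having $\adv$ fully control the response delivered to $P$. For the second, I would argue that freshness in this model is defined on the honest parties' states, and $\adv$ corrupts neither $P$ nor $Q$.
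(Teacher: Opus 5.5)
Your overall strategy is the right one: exhibit an explicit AV-collision man-in-the-middle and show that its advantage equals the hit probability of $\min\{q,|Y|\}$ fresh random-oracle trials. The paper only cites this result from \cite{RMSL:2026:AKE_COM}, but the form of $Y$ fixes the intended attack. Your Steps 2 and 3 are fine. The gap is in Step 1, which runs the attack in the wrong direction. You fix $P$'s AV first and then search over candidate first messages $\pk_i$ to deliver to $Q$, querying ``the input that $Q$ would hash if $Q$ received $\pk_i$''. That input depends on $Q$'s fresh key pair $(\sk_Q,\pk_Q)$, through $\pk_Q$ and through the shared value $\texttt{KA}(\pk_i,\sk_Q)$. $Q$ generates this key pair only after it has received the first message. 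Before delivering $\pk_i$ you cannot compute $Q$'s oracle input at all, and once $\pk_i$ is delivered it is fixed for that session. So against $Q$ you get a single trial, with success about $2^{-l}$, not $1-(1-2^{-l})^{\min\{q,|Y|\}}$. The mismatch also shows in your candidates. A candidate $y_i=(\pk_i,\texttt{KA}(\pk_p,\sk_i))$ contains the secret that $P$ would derive on receiving $\pk_i$. It does not contain what $Q$ would derive on receiving $\pk_i$ in place of $\pk_p$.

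The fix is to swap the roles of target and search.
\begin{enumerate}
\item Intercept $P$'s first message and deliver to $Q$ a key $\pk_E$ of your own. This is a one-shot choice.
\item When $Q$ answers with $\pk_Q$, $Q$'s AV $v_Q$ is fully determined, and you can compute it because you know $\sk_E$ and hence $\texttt{KA}(\pk_Q,\sk_E)$.
\item Withhold the response to $P$. In the UM you control delivery, so you can hold it as long as you like.
\item Search offline over candidates $(\pk_i,K_i)\in Y$ with $K_i=\texttt{KA}(\pk_p,\sk_i)$. For each, query $P$'s AV input with $\pk_i$ as the response, until one value equals $v_Q$. Then deliver that $\pk_i$ to $P$.
\end{enumerate}
This is exactly why $Y$ is defined through shared secrets with $P$'s key $\pk_p$. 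With this ordering your Step 2 independence argument goes through: the TLV-encoded candidate inputs are pairwise distinct and all differ from $Q$'s input since $\pk_E\neq\pk_p$. Your Step 3 computation $2\,|p+(1-p)/2-1/2|=p$ is also unchanged. One small adjustment remains: one oracle query is spent computing $v_Q$, so strictly you get $\min\{q-1,|Y|\}$ trials.
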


\subsection{Reducing the number of passes}
In \cite{RMSL:2026:AKE_COM}, a direct application of the CS-based MT compiler to the above 2-pass protocol was employed to derive a secure KEM-based protocol in the UM. 

This application yielded, after message optimizations, 4-passes and a two way authentication protocol (i.e. a protocol in which the two messages of the AM-secure 2-pass protocol are authenticated).

We define now an alternative AKE protocol from KEM schemes constructed only under 3 passes, which apples directly the same ideas as the CS-based MT authenticator.
\begin{pchstack}[boxed, center, space=1em]
    \procedure{3-pass Commitment-based KEM protocol}{%
     \textbf{Alice} \> \> \textbf{Bob} \\
     (\pk, \sk) \sample \texttt{KeyGen()} \> \> \\
     N \sample \{0,1\}^n \> \> \\
     (c, d) \sample \texttt{Com($N$)} \> \> \\
     \> \sendmessageright{top=\text{(c, $\pk$)}} \> \\
     \> \>(K_B, ct) \sample \texttt{Encaps($\pk$)} \\ 
     \> \sendmessageleft{top=\text{ct}} \> \\ 
     K_{A} = \texttt{Decaps}(ct, \sk) \> \> \\
     E_A = \texttt{G}(B, c, \pk, N, ct) \> \> \\
     \> \sendmessageright{top=\text{d}} \> \\
     \> \> N^{\prime} = Open(c, d) \\
     \> \> E_B = \texttt{G}(B, c, \pk, N^{\prime}, ct)
     }
\end{pchstack}
The idea behind the above protocol is the following:
\begin{itemize}
    \item Alice generates a key pair. Since a commitment of the public key would completely block the capacity of ending the protocol in three passes, Alice generates commitment \textit{c} of a random element $N$, and sends this commitment to Bob, along with the public key. It will be shown that a commitment of a random element that is involved on the AV calculation will be sufficient.
    \item Bob, upon reception of such message, just generates an encapsulation using the public ket and transmits it to Alice. 
    \item When Alice receives the encapsulation, she can proceed to derive the shared secret via a decapsulation procedure, and liberates the opening \textit{d} associated with the previous commitment, so Bob can access Alice's random value.
    \item Both parties generate a summary of the exchange, by deriving an AV from the commitment \textit{c}, the public elements of both parties, the identity of the intended recipient $B$, which is added to ensure that an attacker cannot redirect legitimate messages to other uncorrupted not-intended destinations, and Alice's random element. Once the exchange is finished, the responder will validate its AV against the one obtained by Alice, in a secure and authenticated way, as modeled by $I_f$.
\end{itemize}
In order to formalize its SK-security, we follow the exact same ideas as Section 3. First, we prove that the above protocol emulates the generic 2-pass KEM-based protocol over unauthenticated networks.

\begin{proposition} \label{3_pass_2_pass_emulation_bound_kem}
The generic 3-pass KEM-based protocol, when instantiated with a secure Robust Commitment Scheme CS and a random oracle O, $\epsilon$-emulates the generic 2-pass KEM-based protocol over unauthenticated channels with 
\begin{multline*}
\epsilon \leq l \cdot (\frac{3}{2^{n_{av}}} + \frac{q_O}{2^{n_{av}}} \cdot ((1 + \advantage{\texttt{Hiding}}{\texttt{CS}} \cdot \advantage{\texttt{CR}}{\texttt{CS}}) \cdot \advantage{\texttt{Hiding}}{\texttt{CS}} \\ 
+ 2 \cdot \advantage{\texttt{Binding}}{\texttt{CS}} +\advantage{\texttt{Strong-Binding}}{\texttt{CS}}))
\end{multline*} 
where $l = n_p^2 \cdot n_e$, $n_p$ is the number of parties running the protocol, $n_e$ the maximum number of exchanges initiated by each party, $q_O$ is the maximum number of queries that a PPT adversary issues to $O$ and $n_{av}$ the length of the random oracle $O$.
\end{proposition}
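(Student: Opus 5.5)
The plan is to mirror the proof of Proposition \ref{3_pass_2_pass_emulation_bound} almost line by line, with the committed public key $\pk_P$ replaced by the committed nonce $N$, the responder's public key $\pk_Q$ replaced by the ciphertext $ct$, and the public key $\pk$ now sent in the clear alongside $c$. The reduction has the same structure. An AM-adversary simulates a UM-adversary perfectly unless the event occurs that ``Q validated a KE request from P with public key $\pk$'' without P having started such a request to Q. I will bound this event for a single triple $(P,Q,\pk)$ and then multiply by $l = n_p^2 \cdot n_e$.

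For a single session I define a game KEM$_{AV}$. The challenger samples $Q$, $(\pk,\sk)$ and $N$, and computes $(c,d)\sample \texttt{Com}(N)$. $\adv_1$ receives $(c,\pk,Q)$ and outputs $(Q^*,c^*,\pk^*)$. The challenger computes $(K,ct)\sample\texttt{Encaps}(\pk^*)$. $\adv_2$ outputs $ct^*$, and the challenger sets $v_1 = O(Q,c,\pk,N,ct^*)$. $\adv_3$, given $d$, outputs $d^{**}$, and the challenger sets $v_2 = O(Q^*,c^*,\pk^*,Open(c^*,d^{**}),ct)$. The adversary wins if $v_1=v_2$, both values are non-$\bot$, and its tuple differs from the honest one. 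By the TLV encoding of Definition \ref{AV_G}, any input difference forces a genuine oracle collision. I then introduce the hybrids KEM$_0$ and KEM$_1$. KEM$_0$ forwards $d$ faithfully. KEM$_1$ additionally forwards $ct$ faithfully and only allows modification of the first message $(c,\pk)$ and of the recipient.

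\textbf{Step 1} bounds the gap between KEM$_{AV}$ and KEM$_0$. I split on whether the first two messages were altered ($F_1$) or not ($F_2$). Under $F_1$, either the adversary's original opening already collides, with probability $2^{-n_{av}}$, or it must produce a different valid opening of $c^*$. That case is a Combined Collision instance whose sampler succeeds with probability at most $\advantage{\texttt{Binding}}{\texttt{CS}}$, so Proposition \ref{combined_adv} gives $q_O\advantage{\texttt{Binding}}{\texttt{CS}}/2^{n_{av}}$. Under $F_2$, either $d^{**}\neq d$ opens to the same $N$, which costs $\advantage{\texttt{Strong-Binding}}{\texttt{CS}}$, or it opens to a different $N$, which is again the combined-collision binding term.

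\textbf{Step 2} bounds the gap between KEM$_0$ and KEM$_1$. Here the adversary chooses $ct^*$ to force $v_1=v_2$. Apart from a blind collision costing $2^{-n_{av}}$, an informed choice requires knowing $N$. I will argue that $N$ is uniform and appears only inside $c$, so recovering it is a one-wayness game bounded by $\advantage{\texttt{Hiding}}{\texttt{CS}}$. Conditioned on knowing $N$, I split on $c^*=c$ and $c^*\neq c$. If $c^*=c$, the adversary must search for a collision over its choices of $ct^*$, costing $q_O/2^{n_{av}}$. If $c^*\neq c$, it additionally needs $Open(c^*,d)\neq\bot$, bounded by the CR advantage, and needs to recover that value, bounded by Hiding again. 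This reproduces the factor $(1+\advantage{\texttt{Hiding}}{\texttt{CS}}\advantage{\texttt{CR}}{\texttt{CS}})\advantage{\texttt{Hiding}}{\texttt{CS}}\, q_O/2^{n_{av}}$.

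\textbf{Step 3} handles KEM$_1$ itself. There $(c^*,Q^*,\pk^*)\neq(c,Q,\pk)$ is fixed before $ct$ is generated, so any win requires a fresh random collision, costing $2^{-n_{av}}$. Summing the three steps gives exactly the stated per-session bound.

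I expect the main obstacle to be Step 2, and specifically a difference from the KA case. In the KA protocol the committed value $\pk_P$ was essential to the key, whereas here $N$ is only a nonce. Moreover, $\pk$ is sent in the clear, so the adversary can substitute $\pk^*$ and decapsulate $ct$ itself. I will handle this by stressing that $\pk$ is an input to both $v_1$ and $v_2$, so any substitution of $\pk$ falls under the input-difference case. I will also argue that $v_1$ depends on the hidden $N$, so that targeting $ct^*$ without knowledge of $N$ is no better than a random collision.
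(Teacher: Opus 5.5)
Your plan follows the paper's proof essentially line for line: the same game KEM$_{AV}$ and hybrids KEM$_0$, KEM$_1$; the same $F_1/F_2$ split in Step 1, charged to Binding via Proposition~\ref{combined_adv} and to Strong-Binding; the same $D$/$C$ case analysis in Step 2, charged to Hiding and CR; and a single random collision for KEM$_1$. As in the paper, $\adv_1$ should also output a declared opening $d^*$, so that the ``original opening'' in your $F_1$ split is defined.

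One correction, which the paper's own final summation shares. In the Strong-Binding case all AV inputs coincide, so no oracle collision is needed, and your Step 1 rightly charges $\advantage{\texttt{Strong-Binding}}{\texttt{CS}}$ at full cost. The three steps therefore sum to the stated expression with that term outside the factor $q_O/2^{n_{av}}$, not inside it, so your claim of ``exactly the stated bound'' does not follow. Either state the bound that way, or remove $d$ from the win condition. An adversary that alters only the opening leaves $Q$ validating exactly the $(P,\pk)$ that $P$ sent, so this is no emulation failure, and the whole $F_2$ branch then drops out.
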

\begin{proof}
First, we will define the output "Q validated a KE request from P with public key $\pk$" as the event in which party Q receives the start of an exchange request from party P and public key $\pk$, and successfully follows it through. And the output “P started a KE request to Q with public key $\pk$” as the one in which party P sends an exchange request to party Q employing public key $\pk$.

An AM-adversary can perfectly simulate an UM-adversary unless the event \textit{'In the UM there is the output “Q validated a KE request from P with public key $\pk$” for some parties P and Q, but there was no previous output “P started a KE request to Q with public key $\pk$”, for uncorrupted parties P, Q'} happens. We first bound the probability of this event happening on any particular $(P, Q, \pk)$, and then limit the probability of distinguishing between adversaries by the probability of the above event happening on a particular exchange times the maximum number of exchanges generated.
    
Under the newly defined UM, if the output \textit{"Q validated an exchange request from P with public key $\pk$"} happens, in particular it means that $Q$ has trustfully validated $P$ as the party at the other end of the communication. This means that we can establish $P$ as the sender, and $Q$ as the final receiver. It remains to see that, except with negligible probability, it cannot happen that $P$ intended to contact another party $Q^* \neq Q$ or send another public key $\pk^* \neq \pk$.
    
Therefore, the probability of this event happening on any particular $(P, Q, \pk)$ is reduced to the probability of an UM adversary to generate different elements, or a different receiver, to those intended in a way that the later verification phase $I_f$ between $P$ and $Q$ is successful. We model that in the game G-KEM$_{AV}$ and the advantage against the scheme as
\begin{align*}
    \advantage{}{\texttt{CS-KEM-3}} := \abs{\prob{\texttt{KEM$_{AV}$} \Rightarrow 1}}
\end{align*}
\begin{pchstack}[ boxed , center, space=1em]
  {\procedure[linenumbering]{Game KEM$_{AV}$}{\phantomsection\label{KEM_CS}
      Q \sample \mathcal{P} \\
      (\sk, \pk) \sample \texttt{KeyGen()} \\
      N \sample \{0,1\}^n \\
      (c, d) \sample \texttt{Com(N)}  \\
      (Q^*, c^*, d^*, \pk^*) \sample \adv_1^{O}(c, \pk, Q) \\
      ct \sample \texttt{Encaps($\pk^*$)} \\
      ct^* \sample \adv_2^{O}(c, c^*, d^*, \pk, \pk^*, Q, Q^*, ct) \\
      v_1 = O(Q, c, \pk, N, ct^*) \\
      d^{**} \sample \adv_3^{O}(c, c^*, d^*, \pk, \pk^*, Q, Q^*, ct, ct^*, d) \\
      N^{**} = Open(c^*, d^{**}) \\
      v_2 = O(Q^*, c^*, \pk^*, N^{**}, ct) \\
      \pcreturn (v_1 = v_2) \wedge \bigwedge (v_i \neq \bot) \wedge (c^*, Q^*, \pk^*, ct^*, d^{**}) \neq (c, Q, \pk, ct, d)
    }}
\end{pchstack}
where $\mathcal{P}$ represent the space of valid \textit{parties} and $O(P, x_1, x_2, x_3, x_4)$ represents the AV calculation that will undergo each party and $P$ is the identity of the intended receiver.

We define the intermediate games $KEM_0$ and $KEM_1$, where intuitively each game represents an additional layer where the attacker behaves like the AM model: in game $KEM_0$, the attacker does not modify the flow of the last exchange and in game $KEM_1$ the attacker only modifies the first exchange.

\begin{pchstack}[ boxed , center, space=1em]
  {\procedure[linenumbering]{Game KEM$_0$}{\phantomsection\label{G00_KEM}
      Q \sample \mathcal{P} \\
      (\sk, \pk) \sample \texttt{KeyGen()} \\
      N \sample \{0,1\}^n \\
      (c, d) \sample \texttt{Com(N)}  \\
      (Q^*, c^*, d^*, \pk^*) \sample \adv_1^{O}(c, \pk, Q) \\
      ct \sample \texttt{Encaps($\pk^*$)} \\
      ct^* \sample \adv_2^{O}(c, c^*, d^*, \pk, \pk^*, Q, Q^*, ct) \\
      v_1 = O(Q, c, \pk, N, ct^*) \\
      N^{**} = Open(c^*, d) \\
      v_2 = O(Q^*, c^*, \pk^*, N^{**}, ct) \\
      b_1 = (v_1 == v_2) \\
      b_2 = \bigwedge (v_i \neq \bot) \\
      b_3 = (c^*, Q^*, \pk^*, ct^*) \neq (c, Q, \pk, ct) \\
      \pcreturn b_1 \wedge b_2 \wedge b_3
    }}
    
  {\procedure[linenumbering]{Game KEM$_1$}{\phantomsection\label{G10_KEM}
      Q \sample \mathcal{P} \\
      (\sk, \pk) \sample \texttt{KeyGen()} \\
      N \sample \{0,1\}^n \\
      (c, d) \sample \texttt{Com(N)}  \\
      (Q^*, c^*, d^*, \pk^*) \sample \adv_1^{O}(c, \pk, Q) \\
      ct \sample \texttt{Encaps($\pk^*$)} \\
      v_1 = O(Q, c, \pk, N, ct) \\
      N^{**} = Open(c^*, d) \\
      v_2 = O(Q^*, c^*, \pk^*, N^{**}, ct) \\
      b_1 = (v_1 == v_2) \\
      b_2 = \bigwedge (v_i \neq \bot) \\
      b_3 = (c^*, Q^*, \pk^*) \neq (c, Q, \pk^*) \\
      \pcreturn b_1 \wedge b_2 \wedge b_3
    }}
\end{pchstack}

Therefore, the value $\abs{\prob{\texttt{KEM$_{AV}$} \Rightarrow 1} - \prob{\texttt{KEM$_0$} \Rightarrow 1}}$ represents the advantage of the attacker in generating a collision between the oracle output from both parties when being able to modify the opening value received by the responder.

\textbf{Step 1) Analysis of $\abs{\prob{\texttt{KEM$_{AV}$} \Rightarrow 1} - \prob{\texttt{KEM$_0$} \Rightarrow 1}}$:} In order to bound this difference, let us define $\mathcal{F}$ as the event that, in Game KEM$_{AV}$, an attacker has the ability to exploit the additional step $\adv_3$ to win the game. Formally, this means the event the attacker is able to generate a value $d^{**}$ in step $\adv_3$ such that $Open(c^*, d^{**}) = N^{**}$, satisfying $v_1 = v_2 \wedge (c^*, Q^*, \pk^*, ct^*, d^{**}) \neq (c, Q, \pk, ct, d)$. 

Note that, by definition of the games under consideration, since the event $\mathcal{F}$ is the only difference between the games, the application of the difference lemma \cite{cryptoeprint:2004/332} yields
$$\abs{\prob{\texttt{KEM$_{AV}$} \Rightarrow 1} - \prob{\texttt{KEM$_0$} \Rightarrow 1}} \leq \prob{\mathcal{F} \Rightarrow 1}$$
Now, it is required to bound the probability of the above event. To do so, two events are considered: \texttt{F$_1$}, in which $(c^*, Q^*, \pk^*, ct^*) \neq (c, Q, \pk, ct)$, and \texttt{F$_2$}, in which $(c^*, Q^*, \pk^*, ct^*) = (c, Q, \pk, ct)$. \vspace{1mm}

\textbf{Event \texttt{F$_1$}:} Under \texttt{F$_1$}, $(Q, c, \pk, N, ct^*) \neq (Q^*, c^*, \pk^*, N^{**}, ct)$ and therefore the only possible way to achieve $v_1 = v_2$ is via an oracle collision. Therefore, an adversary needs to find $d^{**}$ such that $Open(c^*, d^{**}) = N^{**}$ and $O(Q, c, \pk, N, ct^*) = O(Q^*, c^*, \pk^*, N^{**}, ct)$.

For this analysis, two sub-cases are defined: given $N^* = Open(c^*, d^*)$, define \texttt{F$_{1, 1}$} as the event that $O(Q^*, c^*, \pk^*, N^*, ct) = O(Q, c, \pk, N, ct^*)$, and \texttt{F$_{1, 2}$} as the event that $O(Q^*, c^*, \pk^*, N^*, ct) \neq O(Q, c, \pk, N, ct^*)$. \vspace{1mm}

\textbf{Sub-event \texttt{F$_{1, 1}$}:} If event \texttt{F$_{1, 1}$} happens, the adversary has already won the game, as a random collision has happened. Therefore, $\prob{\texttt{F} \cond \texttt{F$_{1, 1}$}, \texttt{F$_{1}$}} = 1$ and $\prob{\texttt{F$_{1, 1}$} \cond \texttt{F$_{1}$}} = \frac{1}{2^{n_{av}}}$. \vspace{1mm}

\textbf{Sub-event \texttt{F$_{1, 2}$}:} If \texttt{F$_{1, 2}$} holds, the adversary is required to find $d^{**}$ such that $Open(c^*, d^{**}) = N^{**} \neq N^*$ satisfying $O(Q^*, c^*, \pk^*, N^{**}, ct) = O(Q, c, \pk, N, ct^*)$. Note that the requirement $N^{**} \neq N^*$ is precisely because $\texttt{F$_{1, 2}$}$ ensures that, if $N^{**} = N^*$, then $O(Q^*, c^*, \pk^*, N^{**}, ct) = O(Q^*, c^*, \pk^*, N^*, ct) \neq O(Q, c, \pk, N, ct^*)$. 

We claim that the probability $\prob{\texttt{F} \cond \texttt{F$_{1, 2}$}, \texttt{F$_{1}$}}$ is bounded by the \textit{Combined Collision} advantage: first note that the probability of an adversary to find a $d^{**}$ such that $Open(c^*, d^{**}) = N^{**} \neq N^*$ is modeled exactly by the Game G$_{Binding^*}$ defined above.

Now, we define an adversary $\mathcal{Y}_1$ that returns $N^{**}$ if it can find $d^{**}$ such that $Open(c^*, d^{**}) = N^{**} \neq N^*$, which by the above analysis, has probability of success bounded by $\advantage{\texttt{Binding}}{\texttt{CS}}$. Then, $\prob{\texttt{F} \cond \texttt{F$_{1, 2}$}, \texttt{F$_{1}$}}$ is exactly $\advantage{\texttt{combined}}{\texttt{CHF}}[(\mathcal{A}, n_{av}, \mathcal{Y}_1)]$ and thus by \ref{combined_adv} $$\prob{\texttt{F} \cond \texttt{F$_{1, 2}$}, \texttt{F$_{1}$}} \leq q_O \frac{\advantage{\texttt{Binding}}{\texttt{CS}}}{2^{n_{av}}}$$

\textbf{Event \texttt{F$_2$}:} Under \texttt{F$_2$}, $(c^*, Q^*, \pk^*, ct^*) = (c, Q, \pk, ct)$ and therefore there are only two possible ways to win the game \texttt{F}. The first one, referred to as \texttt{F$_{2, 1}$}, is to generate a collision on the inputs to the oracle function. This means, to find $d^{**} \neq d$ such that $Open(c^*, d^{**}) := N^{**} = N =: Open(c, d)$. Note that the requirement $d^{**} \neq d$ is to ensure that a collision on the modified elements does not occur. \vspace{1mm} 

\textbf{Sub-event \texttt{F$_{2, 1}$}:} The probability of such a collision is modeled exactly by the game G$_{Strong-Binding^*}$ defined above. Therefore
$$\prob{\texttt{F$_{2, 1}$} \cond \texttt{F$_2$}} = \prob{\texttt{G$_{Strong-Binding^*}$} \Rightarrow 1} \leq \advantage{\texttt{Strong-Binding}}{\texttt{CS}}$$

The second one, referred to as \texttt{F$_{2, 2}$}, is to generate a collision on the outputs, provided that the inputs are not equal. This means, to find $d^{**} \neq d$ such that $Open(c^*, d^{**}) = N^{**} \neq N$ and $O(Q^*, c^*, \pk^*, N^{**}, ct) = O(Q, c, \pk, N, ct^*)$. \vspace{1mm}

\textbf{Sub-event \texttt{F$_{2, 2}$}:} In a similar manner to above, we claim that the probability $\prob{ \texttt{F$_{2, 2}$} \cond \texttt{F$_{2}$}}$ is bounded by the \textit{Combined Collision} advantage: first note that the probability of an adversary to find a $d^{**} \neq d$ such that $Open(c, d^{**}) = N^{**} \neq N$ is modeled exactly by the Game G$_{Binding^*}$.

Now, we define an adversary $\mathcal{Y}_2$ that returns $N^{**}$ if it can find $d^{**} \neq d$ with the above conditions. Then, $\prob{\texttt{F$_{2, 2}$} \cond \texttt{F$_{2}$}}$ is exactly $\advantage{\texttt{combined}}{\texttt{CHF}}[(\mathcal{A}, n_{av}, \mathcal{Y}_2)]$ and thus by \ref{combined_adv} 
$$\prob{ \texttt{F$_{2, 2}$} \cond \texttt{F$_{2}$}} \leq \prob{\texttt{G$_{Binding^*}$} \Rightarrow 1} \cdot \frac{q_O}{2^{n_{av}}} \leq \advantage{\texttt{Binding}}{\texttt{CS}} \cdot \frac{q_O}{2^{n_{av}}}$$

Together, we have 
\begin{align*}
& \prob{\texttt{F} \Rightarrow 1} = \prob{\texttt{F} \Rightarrow 1 \cond \texttt{F$_1$}} \cdot \prob{\texttt{F$_1$}} + \prob{\texttt{F} \Rightarrow 1 \cond \texttt{F$_2$}} \cdot \prob{\texttt{F$_2$}} \leq \\
& \prob{\texttt{F} \Rightarrow 1 \cond \texttt{F$_{1, 1}$}, \texttt{F$_1$}} \cdot \prob{\texttt{F$_{1, 1}$} \cond \texttt{F$_1$}} + \prob{\texttt{F} \Rightarrow 1 \cond \texttt{F$_{1, 2}$}, \texttt{F$_1$}} \cdot \prob{\texttt{F$_{1, 2}$} \cond \texttt{F$_1$}} + \prob{\texttt{F} \Rightarrow 1 \cond \texttt{F$_2$}} \leq \\
& \frac{1}{2^{n_{av}}}\cdot (1 + q_O \cdot \advantage{\texttt{Binding}}{\texttt{CS}}) + \prob{\texttt{F$_{2, 1}$} \Rightarrow 1 \cond \texttt{F$_2$}} + \prob{\texttt{F$_{2, 2}$} \Rightarrow 1 \cond \texttt{F$_2$}} \leq \\
& \frac{1}{2^{n_{av}}}\cdot (1 + 2 \cdot q_O \cdot \advantage{\texttt{Binding}}{\texttt{CS}}) +\advantage{\texttt{Strong-Binding}}{\texttt{CS}}
\end{align*}

The value $\abs{\prob{\texttt{KEM$_0$} \Rightarrow 1} - \prob{\texttt{KEM$_1$} \Rightarrow 1}}$ represents the advantage of the attacker in generating a collision between the oracle output from both parties when being able to modify the random challenge value sent by the responder. \vspace{1mm}

\textbf{Step 2) Analysis of $\abs{\prob{\texttt{KEM$_{0}$} \Rightarrow 1} - \prob{\texttt{KEM$_1$} \Rightarrow 1}}$:} In order to bound this difference, let us define $\mathcal{H}$ as the event that in game \texttt{KEM$_0$} an adversary has the ability to exploit the second exchange to generate $ct^*$ in step $\adv_2$ such that $v_1 = v_2$ and $(c^*, Q^*, \pk^*, ct^*) \neq (c, Q, \pk, ct)$. Note that, by definition of the games under consideration, the application of the difference lemma \cite{cryptoeprint:2004/332} yields
$$\abs{\prob{\texttt{KEM$_{0}$} \Rightarrow 1} - \prob{\texttt{KEM$_1$} \Rightarrow 1}} \leq \prob{\mathcal{H} \Rightarrow 1}$$.
Now, it is required to bound the probability of the above event. There are exactly two ways: via a random generation of $ct^*$ and collision between outputs (referred to as \texttt{H$_1$}), and by trying to extract information and generating a targeted value for $ct^*$ (i.e. not trying random collision), referred to as \texttt{H$_2$}.

\textbf{Event \texttt{H$_1$}:} The advantage in winning the game from the first option is bounded by the probability of a random collision, which is $\frac{1}{2^{n_{av}}}$. Therefore, we will focus on the probabilities for an adversary in making an elaborated guess. \vspace{1mm}

\textbf{Event \texttt{H$_2$}:} The analysis is done in terms of the event that the adversary knows the value of $N$, referred to as $D$. First, we will analyze the probability of $D$ itself. Since $N = Open(c, d)$, the pair $(c, d)$ represents a valid opening of $N$. Therefore, this probability is bounded by the advantage in retrieving the committed input from a valid commitment. This is modeled by the game G$_{Hiding^*}$ above. Therefore, via the same argument as in the proof of \ref{3_pass_2_pass_emulation_bound}, it holds that $\prob{D} \leq \advantage{\texttt{Hiding}}{\texttt{CS}}$. \vspace{1mm}

\textbf{Sub-event \texttt{$\neg D$}:} In case $D$ does not hold, the adversary does not possess all the information required to make an informed guess: the value $v_1$ is dependent on both the choice of $ct^*$ of the adversary and $N$, which is defined but not known to the adversary. Therefore, $\prob{\texttt{H$_2$} \cond \neg D} = 0$, as the only possible way when not all information is known is random collision. \vspace{1mm}


\textbf{Sub-event \texttt{$D$}:} In case $D$ does hold, we analyze the success probability in terms of the event $C$ that $c^* = c$. If $C$ is also true, then $Open(c^*, d) = N^{**} = N = Open(c, d)$. Therefore, since $N$ is known, $N^{**}$ is also known. This provides the adversary with the knowledge of all elements involved in the generation of both $v_1$ and $v_2$. 

Thus the success probability $\prob{\texttt{H$_2$} \cond C,D}$ resides in the adversaries ability to generate a random element $ct^*$ such that 
$$O(Q^*, c, \pk^*, N, ct) = O(Q, c, \pk, N, ct^*)$$ 
Since the attacker cannot win the game via input collision, as it must hold that $(Q^*, c^*, \pk^*, ct^*) \neq (Q, c, \pk, ct)$, the only viable option is to search for a preimage-collision over valid encapsulation values $ct^*$, which has probability at most $\frac{q_O}{2^{n_{av}}}$, with $q_O$ being the maximum number of queries made to the oracle $O$.

In case $D$ holds but $C$ does not, further information cannot be claimed about $N^{**}$. For any adversary to be able to win the game without random collision, three conditions must happen simultaneously: first, $N^{**} \neq \bot$. Second, to be able to retrieve $N^{**}$ from $c^*$. Lastly, the ability to force a collision via generating $ct^*$.

For the analysis of first condition, since $N^{**} = Open(c^*, d)$ and $c^* \neq c$, it is required to bound the probability that $N^{**} \neq \bot$. It can be bounded by the advantage of the CR property of commitment schemes. Note, however, that this bound is very conservative, since the CR-CS game allows the adversary to find $(c, c', d)$, while in here they are all established.

The second condition requires to bound the probability of retrieving its value. Since $N^{**} = Open(c^*, d) \neq \bot$, the pair $(c^*, d)$ conforms a valid commitment pair of the value $N^{**}$. Therefore, this probability is bounded by the advantage in retrieving the committed input from a valid commitment. As above, this can be bounded by the Hiding property of the CS.

The third condition requires to bound the probability of being able search for a preimage-collision over encapsulation values $ct^*$. It has probability at most $\frac{q_O}{2^{n_{av}}}$, with $q_O$ being the maximum number of queries made to the oracle $O$.

Together, this means that $$\prob{\texttt{H$_2$} \cond \neg C,D} \leq \advantage{\texttt{Hiding}}{\texttt{CS}} \cdot \advantage{\texttt{CR}}{\texttt{CS}} \cdot \frac{q_O}{2^{n_{av}}}$$
and therefore
\begin{align*}
& \prob{\texttt{H} \Rightarrow 1} \leq \prob{\texttt{H$_1$}} + \prob{\texttt{H$_2$}} \leq \\
& \frac{1}{2^{n_{av}}} + \prob{\texttt{H$_2$} \cond \neg D} \cdot \prob{\neg D} + \prob{\texttt{H$_2$} \cond D} \cdot \prob{D} = \\
& \frac{1}{2^{n_{av}}} + (\prob{\texttt{H$_2$} \cond C, D} \cdot \prob{C \cond D} + \prob{\texttt{H$_2$} \cond \neg C, D} \cdot \prob{\neg C \cond D}) \cdot \prob{D} \leq \\
& \frac{1}{2^{n_{av}}} + (1 + \advantage{\texttt{Hiding}}{\texttt{CS}} \cdot \advantage{\texttt{CR}}{\texttt{CS}}) \cdot \frac{q_O}{2^{n_{av}}} \cdot \advantage{\texttt{Hiding}}{\texttt{CS}}
\end{align*}

\textbf{Step 3) Analysis of $\prob{\texttt{KEM$_1$} \Rightarrow 1}$:} Lastly, the advantage of the game $KEM_{1}$ is exactly the probability of a random collision between inputs when generating a different commitment value $c^*$, a different public key value $\pk^*$ or when specifying a different receiver $Q^*$ than the one $Q$ originally intended, which is $\frac{1}{2^{n_{av}}}$.

Together, we have the following:
\begin{align*}
    & \abs{\prob{\texttt{KEM$_{AV}$} \Rightarrow 1}} \leq \\ 
    & \abs{\prob{\texttt{KEM$_{AV}$} \Rightarrow 1} - \prob{\texttt{KEM$_0$} \Rightarrow 1}} + \abs{\prob{\texttt{KEM$_{0}$} \Rightarrow 1} - \prob{\texttt{KEM$_1$} \Rightarrow 1}} + \abs{\prob{\texttt{KEM$_1$} \Rightarrow 1}} \leq \\
    & \frac{3}{2^{n_{av}}} + \frac{q_O}{2^{n_{av}}} \cdot ((1 + \advantage{\texttt{Hiding}}{\texttt{CS}} \cdot \advantage{\texttt{CR}}{\texttt{CS}}) \cdot \advantage{\texttt{Hiding}}{\texttt{CS}} \\
    & + 2 \cdot \advantage{\texttt{Binding}}{\texttt{CS}} +\advantage{\texttt{Strong-Binding}}{\texttt{CS}})
\end{align*}
    Then, considering all possible messages between parties, which equals the maximum number of messages sent by each party times the square of the number of parties, we arrive to the desired bound.
\end{proof}

It is important to note that this protocol only provides one-way authentication, as only one AV calculation is employed.

Now, the SK security of the 3-pass KEM-based protocol is stated as follows:
\begin{proposition}\label{gen_3_pass_KEM_sk_security}
The 3-pass KEM-based protocol is $(\epsilon + 2\cdot \alpha)$-SK secure in the UM,
    where $\alpha$ is the emulation bound proven in Proposition \ref{3_pass_2_pass_emulation_bound_kem} and $\epsilon$ is the SK-security bound proven in \ref{sk_security_2_pass_kem}.
\end{proposition}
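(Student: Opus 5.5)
The plan is to argue exactly as in Proposition \ref{gen_3_pass_KA_sk_security}: combine the AM security of the 2-pass KEM-based protocol with the emulation result, and let Theorem \ref{ThEmulSK} do the composition. There are three steps.

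\textbf{Step 1 (AM security).} By Proposition \ref{sk_security_2_pass_kem}, the generic 2-pass KEM-based protocol is $\epsilon$-SK-secure in the AM, with
$$\epsilon \leq q \cdot n_P^2 \cdot \advantage{CPA}{KEM}.$$
Here $q$ is the number of sessions per pair of parties. To keep the parameters consistent with the emulation bound, I identify $q$ with $n_e$.

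\textbf{Step 2 (emulation).} Proposition \ref{3_pass_2_pass_emulation_bound_kem} shows that the 3-pass commitment-based KEM protocol $\alpha$-emulates the 2-pass protocol in the UM. Here $\alpha$ is the bound stated there, expressed through the Hiding, Binding, Strong-Binding and CR advantages of the RCS, $q_O$, and $n_{av}$. Theorem \ref{ThEmulSK} then gives $\epsilon$-SK-security of the AM protocol $\Rightarrow$ $(\epsilon+\alpha)$-SK-security of the 3-pass protocol in the UM.

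\textbf{Step 3 (matching the stated bound).} The statement claims $\epsilon+2\alpha$, which is weaker than $\epsilon+\alpha$ and so follows immediately. To explain the factor $2$ naturally, I would follow the structure of the SK-security definition. SK-security has two conditions: matching sessions output the same key, and the key-indistinguishability advantage is small. In the hybrid argument underlying Theorem \ref{ThEmulSK}, the emulation gap can be charged once to each condition. The first charge covers a UM adversary making partners disagree on keys, which requires breaking the AV check bounded by $\alpha$. The second charge covers the distinguishing advantage. This yields at most $\epsilon + 2\alpha$, and in any case the stated bound dominates $\epsilon+\alpha$.

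\textbf{Main obstacle.} The only delicate point is checking that the hypotheses of Theorem \ref{ThEmulSK} actually hold for this ad hoc protocol, which is not a compiler output. The emulation in Proposition \ref{3_pass_2_pass_emulation_bound_kem} is phrased via the event that some $Q$ validates a request not issued by $P$, so I must confirm it covers everything the UM adversary can do differently from an AM adversary.

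In particular, it must cover modifying $\pk$ and $ct$, which the game KEM$_{AV}$ includes through $\pk^*$ and $ct^*$. It must also account for the one-way nature of authentication: only the responder verifies the AV. Here I would argue that the initiator's key is determined by $ct$, which enters the AV computed by the responder. A tampered $ct^*$ is therefore caught at $I_f$, exactly as the game models.

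Once this is accepted, the proof is a direct application of Theorem \ref{ThEmulSK} together with Propositions \ref{sk_security_2_pass_kem} and \ref{3_pass_2_pass_emulation_bound_kem}.
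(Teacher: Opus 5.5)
Your proposal is correct and follows the paper's proof, which is simply a direct application of Theorem \ref{ThEmulSK} to the AM bound of Proposition \ref{sk_security_2_pass_kem} and the emulation bound of Proposition \ref{3_pass_2_pass_emulation_bound_kem}. As you observe, the theorem gives $\epsilon+\alpha$, so the stated $\epsilon+2\alpha$ follows because $\alpha\ge 0$; your heuristic for the factor $2$ (charging $\alpha$ once per SK-security condition) is not needed and is not something Theorem \ref{ThEmulSK} provides, so it is best dropped.
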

\begin{proof}
    The result is a direct application of \ref{ThEmulSK} and \ref{3_pass_2_pass_emulation_bound_kem}.
\end{proof}

\section{Conclusions}
In this work we have shown that 3-pass protocols secure under the commitment-based AKE model of \cite{RMSL:2026:AKE_COM} exist for both KA-based and KEM-based primitives, settling the only open case between the 2-pass impossibility result and the 4-pass constructions of \cite{RMSL:2026:AKE_COM}. The protocols are constructed ad hoc, outside the reach of the commitment-based MT compiler, but follow the same design principles and admit security proofs of the same form, grounded in the hiding, binding, strong-binding, and collision-resistance properties of the underlying commitment scheme and the random oracle model.

The central design insight is that a single commitment, placed at the outset of the exchange before the responder has contributed any information, is sufficient to bind the initiator to its intended values and prevent adversarial manipulation of the AV verification. In the KA-based case this commitment is placed over the initiator's public key directly. In the KEM-based case, where committing to the public key would block the responder from encapsulating without an additional pass, the commitment is instead placed over a fresh random nonce included in the AV computation. The two constructions illustrate how the structural differences between KA and KEM primitives, already highlighted in \cite{RMSL:2026:AKE_COM}, continue to surface at the level of protocol design even when the security arguments run in parallel.

The trade-off incurred by the reduction to 3 passes is one-way rather than mutual authentication: the responder authenticates the initiator, but the initiator has no symmetric guarantee. This is an inherent consequence of the asymmetry of a 3-pass exchange and not a weakness of the specific constructions. Whether a 3-pass protocol achieving mutual authentication is possible under this model, or whether mutual authentication requires at least 4 passes, remains an open question.

More broadly, this work confirms that the commitment-based AKE model of \cite{RMSL:2026:AKE_COM} is flexible enough to admit efficient ad hoc constructions beyond those derivable from its compiler, and that the game-based proof techniques developed there extend naturally to this setting. Identifying further protocol families that fall within the model, and understanding the precise trade-offs between the number of passes, the authentication guarantees achievable, and the underlying primitive, are natural directions for future work.

\bibliography{lib.bib}
\bibliographystyle{plain}

\nocite{AGKS:2005:TKD}
\nocite{BCK:1998:MAD}
\nocite{CK:2001:AKE}
\nocite{CS:2003:DAP}
\nocite{Dent:2003:DGK}

\end{document}